\newtheorem{theorem}{Theorem}
\newtheorem{lemma}[theorem]{Lemma}
\theoremstyle{definition}
\newcommand{\de}{\,\mathrm{d}}
\newcommand{\dd}[2]{\frac{\mathrm{d}#1}{\mathrm{d}#2}}
\newcommand{\Znn}{\mathbb{Z}^{\geq0}}
\newcommand{\Zp}{\mathbb{Z}^{>0}}
\newcommand{\Hill}{\mathcal{H}}
\renewcommand{\i}{\mathrm{i}}
\newcommand{\F}{\mathcal{F}}
\newcommand{\tr}{\mathrm{tr}}
\begin{document}

\preprint{APS/123-QED}

\title{Van Hove singularities in the density of states of a chaotic dynamical system}

\author{Bryn Davies}
\affiliation{Mathematics Institute, University of Warwick, Coventry CV4~7AL, United Kingdom }

\date{\today}

\begin{abstract}
	We show that the statistics of chaotic systems can be predicted by constructing an associated sequence of periodic differential operators and computing their densities of states. For such operators, the density of states is well understood and can be computed straightforwardly, often yielding explicit formulas. As a case study, we investigate a nonlinear recursion relation that maps naturally onto a family of periodic operators generated by a Fibonacci tiling rule. This correspondence enables us to derive an explicit formula for the limiting statistics of the chaotic system and to demonstrate that the clustering near to critical values is equivalent to the van Hove singularities in the operators' densities of states.
\end{abstract}

\maketitle

\section{Introduction}

Chaotic systems have been long-standing objects of fascination. Although the  long-term behaviour of individual orbits is hard to predict, in no small part due to the sensitivity to initial conditions, the collective behaviour of many orbits can often be captured in a statistical sense by understanding convergence to invariant measures.  Then, using probabilistic methods, it is often possible to make precise statements about the likelihood of a future state \cite{lasota2013chaos, walters2000introduction}. Although this is most commonly achieved by computing the density of states through exploiting ergodic properties and finding invariant measures, other methods exist, such as understanding a system's statistics by reformulating the problem in terms of random matrix theory \cite{erdHos2017dynamical, haake1991quantum}. 

%Thus, a classical question to ask is if the initial conditions are chosen at random from a given distribution, what will be the distribution after the system has been allowed to evolve for a given period? 

%...especially when systems are strongly mixing (which is a stronger condition than being chaotic)

In this work, we demonstrate a new approach to predicting the statistics of a chaotic system, by constructing a corresponding sequence of periodic differential operators and computing their associated densities of states. This ambition is similar in spirit to moment problems, where the aim is to find a linear operator whose moments generate the values of the recursion relation \cite{akhiezer2020classical, el2004recursive, taher2009complex}. However, in this case our approach will have the significant benefit of being able to exploit the well-developed spectral theory for periodic differential operators \cite{kuchment2016overview}. This yields predictions of the system's statistics including, as in the case study considered here, explicit formulas.

The link between dynamical systems and spectral properties of one-dimensional differential operators is well known. It has facilitated many breakthroughs on the spectral theory of operators with exotic and non-periodic coefficients. For example, much of the spectral theory for quasi-periodic operators is based on reformulating problems as dynamical systems \cite{simon1982almost, avila2009ten, surace1990schrodinger} and random systems are often studied using Lyapunov functions to characterise decay rates and mean free paths \cite{borland1963nature, carmona1982exponential, comtet2013lyapunov, scales1997lyapunov}. In this work, we will show that this relationship is also useful in the other direction, as reformulating a dynamical system as a sequence of differential operators allows well-known formulas for their densities of states to be used to predict the statistics.

We will take as a case study a nonlinear recursion relation which can be related to a sequence of periodic operators generated by a Fibonacci tiling rule. The recursion relation is 
\begin{equation} \label{eq:recursion}
	x_{n+1} = x_n x_{n-1} - x_{n-2},
\end{equation}
where the initial conditions $x_0$, $x_1$ and $x_2$ are taken to be real numbers. The setting considered in this work is where $x_0$ and $x_1$ are independent random variables and $x_2$ is chosen to depend on $x_0$ and $x_1$. The choice of this dependence is crucial to our method, as we need the relationship to be such that we are able to form an associated sequence of differential operators. In this work, we will focus on relationships that correspond to multiplication of monodromy matrices, as this will generate \eqref{eq:recursion} via a Fibonacci tiling rule. Under these specific, carefully chosen choices of $x_2$, the system converges to a limiting distribution with strong clustering around the critical values $x=\pm2$ (shown on the right in Fig.~\ref{fig:hero}). We will see that this clustering can be linked to van Hove singularities in the densities of states of the associated sequence of differential operators. %This connection was first realised by \cite{kohmoto1983localization} and leads to a sequence of one-dimensional periodic operators. 

\begin{figure}
	\includegraphics[width=\linewidth,trim=0 0 0 0.5cm,clip]{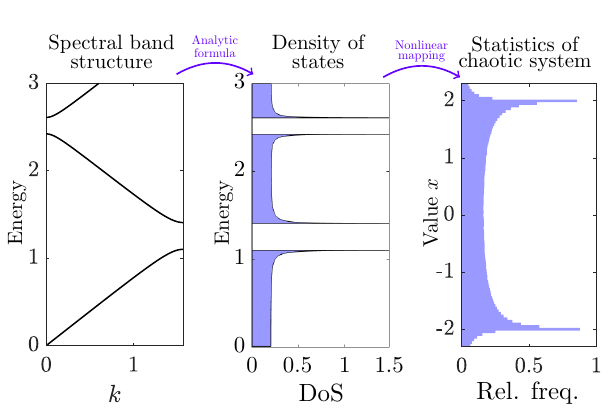}
	\caption{We predict the statistics of a chaotic system by constructing a map from the spectra of a sequence of periodic differential operators. Such operators have spectra composed of continuous bands and analytic formulas for the densities of states (DoS).} \label{fig:hero}
\end{figure}

\begin{figure*}
	\begin{subfigure}[b]{0.3\textwidth}
		\includegraphics[width=0.95\linewidth]{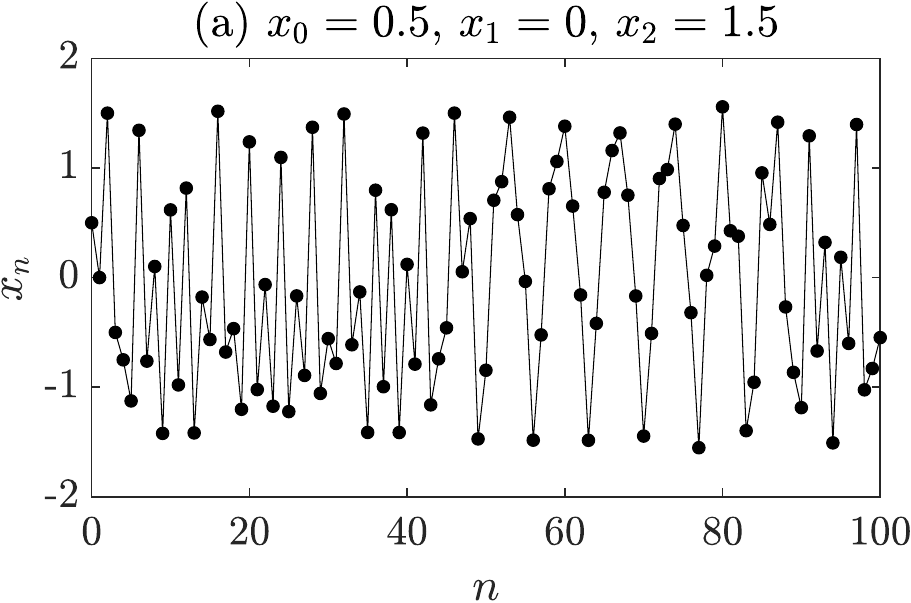}
	\end{subfigure}
	\begin{subfigure}[b]{0.3\textwidth}
		\includegraphics[width=0.95\linewidth]{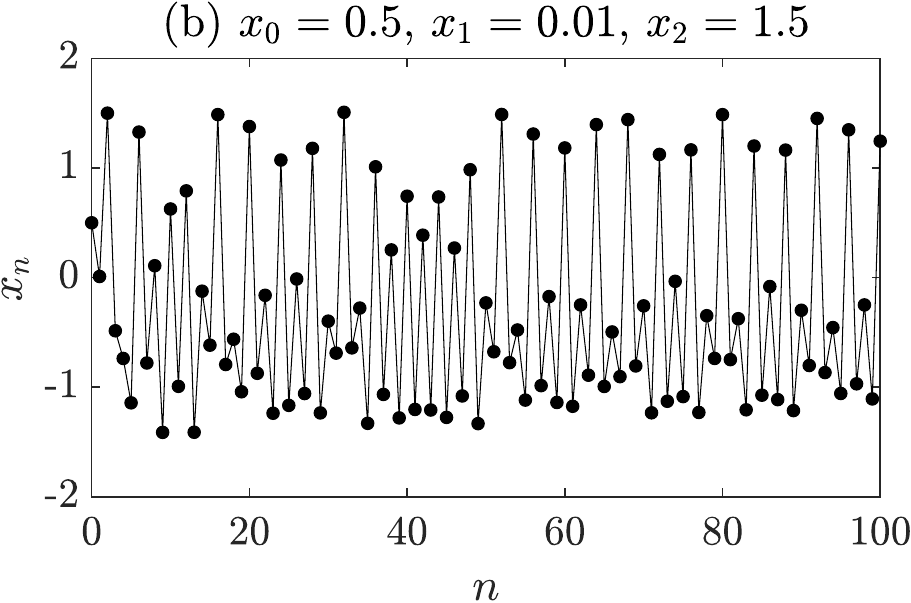}
	\end{subfigure}
	\begin{subfigure}[b]{0.3\textwidth}
		\includegraphics[width=0.95\linewidth]{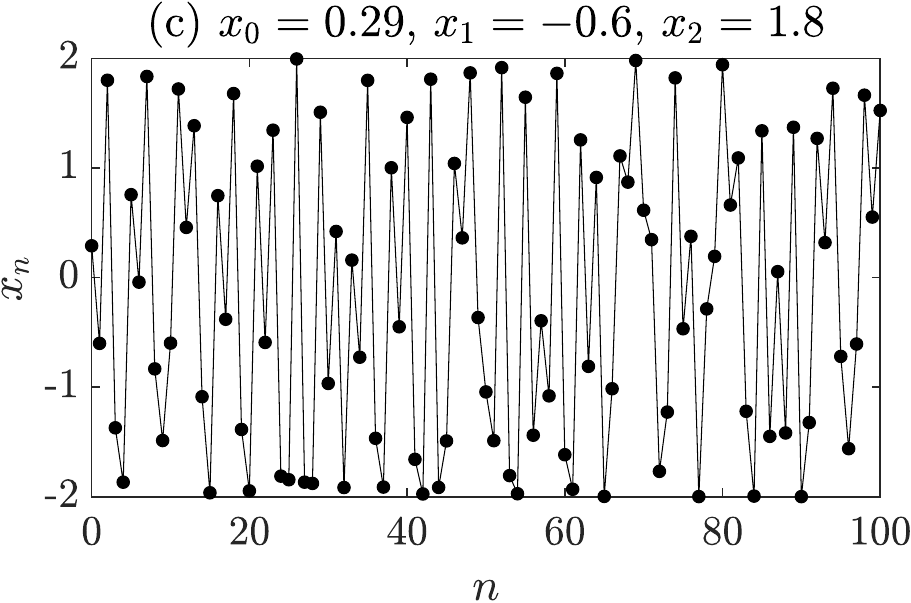}
	\end{subfigure}
	
	\vspace{0.3cm}
	
	\begin{subfigure}[b]{0.3\textwidth}
		\includegraphics[width=0.95\linewidth]{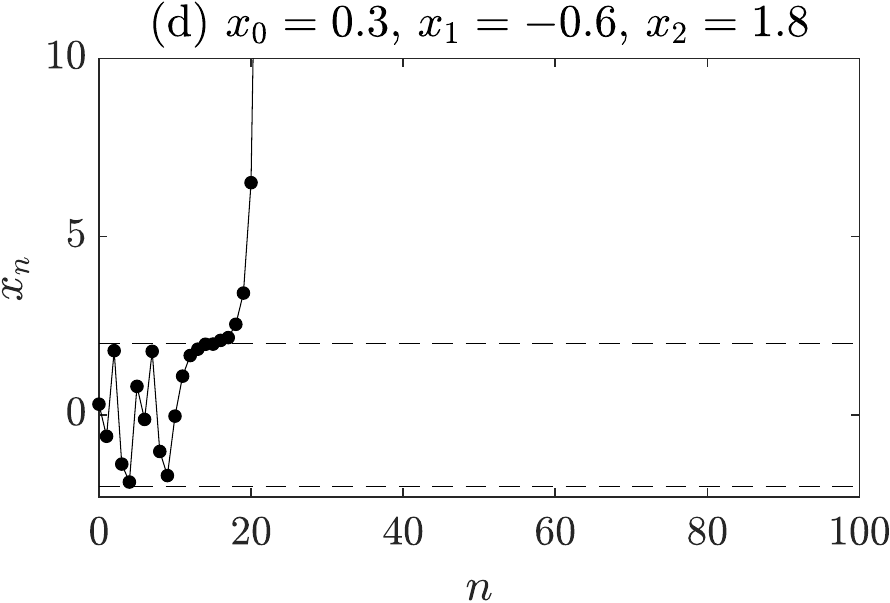}
	\end{subfigure}
	\begin{subfigure}[b]{0.3\textwidth}
		\centering
		\includegraphics[width=0.95\linewidth]{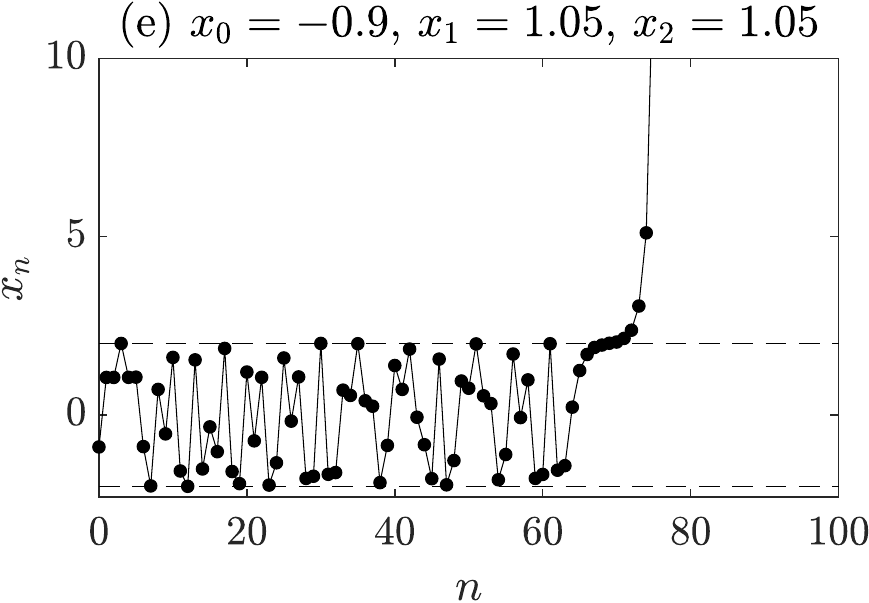}
	\end{subfigure}
	\begin{subfigure}[b]{0.3\textwidth}
		\includegraphics[width=0.95\linewidth]{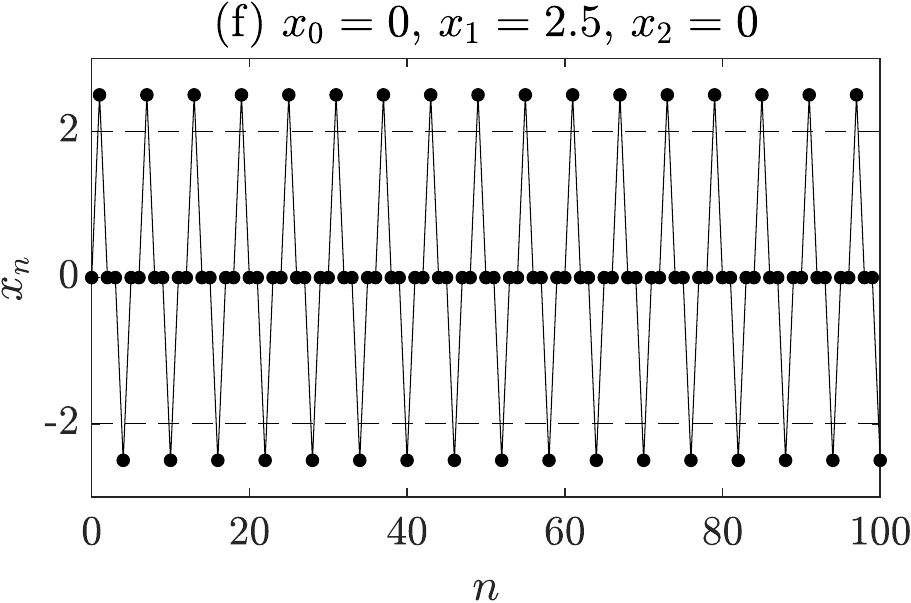}
	\end{subfigure}
	\caption{Examplar orbits of the recursion relation (\ref{eq:recursion}) for different initial conditions. Dashed lines indicate the critical threshold $|x_n|=2$. Examples (a)--(c) are bounded by this threshold (for the 100 terms shown), while (d) and (e) diverge after exceeding it. (f) shows an example of a periodic orbit that exceeds this threshold every third term.} \label{fig:examples}
\end{figure*}

Fibonacci-generated patterns have been studied extensively in a variety of settings. They are a canonical example of a tiling rule that generates a quasicrystal \cite{kolavr1993new, levine1984quasicrystals}. They also tend to lead to operators with exotic spectral properties \cite{kohmoto1984cantor, davies2023super, sutHo1989singular,sutHo1989singular, davies2025convergence, kohmoto1983localization, morini2018waves}, which has inspired their use in designing a variety of quantum and classical wave systems  \cite{jagannathan2021fibonacci, verbin2015topological, davies2022symmetry, gei2020phononic, hamilton2021effective, morini2019negative}. Such materials have been one of the foremost examples in the development of quasi-periodic metamaterials for wave control applications \cite[Section~11]{davies2025roadmap}.

Although the sequence of periodic operators generated by the Fibonacci tiling rule presents interesting and challenging questions about how their spectra evolve and are related to one another \cite{kohmoto1984cantor, davies2023super, sutHo1989singular,sutHo1989singular, davies2025convergence, kohmoto1983localization, morini2018waves}, any individual operator in the sequence has a Bloch spectrum that can be understood relatively easily. Not only can its spectral bands be computed straightforwardly using standard Bloch methods, but the density of states within these bands can be computed using an analytic formula (as the inverse of the slope of the spectral bands) \cite{kuchment2016overview}. By understanding how these states are mapped to values of the dynamical system, its statistics can be predicted. This idea is sketched in Fig.~\ref{fig:hero}. As well as producing an analytic formula for the statistics, this method also reveals how the singularities shown in the statistics in Fig.~\ref{fig:hero} are equivalent to van Hove singularities in the operators' densities of states \cite{van1953occurrence}. These are singularities that occur at the critical points in the spectra of periodic operators, such as at the edges of the spectral bands of one-dimensional operators \cite{reed1978iv}.

This paper will begin by recalling the escape criteria that characterise when an orbit of \eqref{eq:recursion} diverges. Since these criteria are well studied in the literature, we will focus on characterising the statistics of the bounded orbits. In Section~\ref{sec:periodic} we construct an associated sequence of periodic operators using the Fibonacci tiling rule, which we use in Section~\ref{sec:density} to derive our main result, which is an explicit formula for the distribution of bounded orbits.

\section{Escape criteria} \label{sec:basicprops}

We will begin by summarising some of the well-known properties of the recursion relation \eqref{eq:recursion}. Some exemplar orbits are shown in Fig.~\ref{fig:examples}. In each case, the initial values $\{x_i\}_{i=0,1,2}$ are chosen to lie within the interval $[-2,2]$. It is apparent that the behaviour typically takes one of three forms. In some cases, the orbits appear to stay bounded, fluctuating erratically within the interval $[-2,2]$, as shown in Figs.~\ref{fig:examples}(a)--(c). However, for some initial conditions the orbit escapes the interval $[-2,2]$ and then diverges quickly, as shown in Figs.~\ref{fig:examples}(d)--(e). Finally, there also exist periodic orbits. One is depicted in Fig.~\ref{fig:examples}(f), which is an example of the 6-periodic orbits that will occur if two of the three initial values are zero. Notice that the repeating non-zero value can be arbitrarily large, depending on the initial values. % (in particular, it can be either inside or outside of the critical interval $[-2,2]$).

%In light of the behaviour shown in Fig.~\ref{fig:examples}, it may be unclear what escape criteria for this system may look like. Fortunately, this 

Escape criteria for this system have been established in the literature. The existence of periodic orbits as in Fig.~\ref{fig:examples}(f) shows that leaving the interval $[-2,2]$ is not sufficient to guarantee divergence, however it was shown in \cite[Theorem~3.1]{davies2023super} that if we also have three successively growing terms, in the sense that
\begin{equation} \label{eq:growth}
	2<|x_N|<|x_{N+1}|<|x_{N+2}|,
\end{equation}
for some $N$, then the orbit is guaranteed to diverge\footnote{Since the version of this result proved in \cite[Theorem 3.1]{davies2023super} is slightly weaker than showing divergence, a modification of the argument is included here in Appendix~\ref{app:escape},  for completeness.}.

There has been particular interest in studying the cases where the system can be related to the spectra of periodic operators. This happens when $x_2$ is chosen to depend suitably on $x_0$ and $x_1$, as we will study below. In this case, the set of initial conditions for which the recursion relation (\ref{eq:recursion}) remains within the bounded interval $[-2,2]$ is known to be a Cantor set (\emph{i.e.} a closed set with no isolated points and whose complement is dense) \cite{kohmoto1984cantor, sutHo1989singular}. Further, this set can have zero Lebesgue measure \cite{sutHo1989singular} and display the self-similar properties typical of fractals \cite{kohmoto1983localization, morini2018waves}. Additionally, under these assumptions on the initial conditions, the escape criterion \eqref{eq:growth} can be strengthened \cite[Corollary 3.10]{davies2025convergence}.

While \eqref{eq:growth} characterises the divergence that can occur when an orbit leaves the critical interval $[-2,2]$, the question considered in this work is how states behave while they remain within this interval. We will use the link to periodic operators to characterise the statistics and understand the chaotic behaviour of the bounded orbits shown in Fig.~\ref{fig:examples}(a)--(c).

\section{Statistics of bounded orbits} \label{sec:bounded}

A natural first question to ask is how the orbits of (\ref{eq:recursion}) are distributed on the interval $[-2,2]$ when the initial conditions $\{x_i\}_{i=0,1,2}$ are independent random variables. For example, suppose they are independent uniform random variables on $[-2,2]$. In this case, histograms showing the distribution of the orbits are given in Fig.~\ref{fig:indep}. Figs.~\ref{fig:indep}(a)--(c) show the distribution of $x_3$, $x_4$ and $x_5$ restricted to $[-2,2]$, with each histogram showing the result of $10^5$ independent realisations of the initial conditions. Then, Fig.~\ref{fig:indep}(d) shows the limiting statistics of the recursion relation: for each of the $10^5$ realisations of the initial conditions, the first 200 iterations of the recursion relation are computed and all those falling within the interval $[-2,2]$ are plotted. In this case, it is apparent that the values tend to spread across the interval $[-2,2]$ with a bias towards the centre. The ``rounded'' nature of the distribution is already beginning to emerge from the distributions of the first few values, as depicted in Fig.~\ref{fig:indep}(a)--(c). The distribution of $x_3=x_2x_1-x_0$ can be calculated explicitly, using routine calculations for products and sums of random variables (see Appendix~\ref{app:pdf_indep}):
\begin{equation} \label{eq:indepdist}
	f_{X_3}(x)= \tfrac{x-2}{32}\log\left(\tfrac{2-x}{4}\right) - \tfrac{x+2}{32}\log\left(\tfrac{x+2}{4}\right) + \tfrac{1}{8},
\end{equation}
for $-2\leq x\leq2$. This is indicated on Fig.~\ref{fig:indep}(a) by the dashed line. In the limit the statistics converge to the semicircle distribution with radius 2, given by
\begin{equation} \label{dist:semicircle}
	D(x)=\frac{1}{2\pi} \sqrt{4-x^2}.
\end{equation}
This is shown by the dashed line in Fig.~\ref{fig:indep}(d). 

\begin{figure}
	\centering
	\begin{subfigure}[b]{0.485\linewidth}
		\centering
		\includegraphics[width=0.95\linewidth,trim=1.5cm 1.5cm 0 1cm,clip]{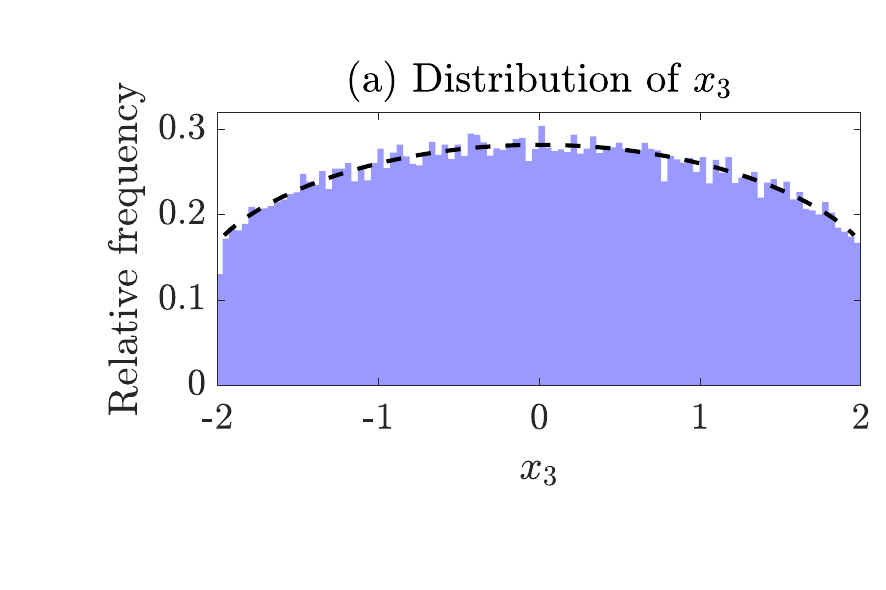}
	\end{subfigure}
	\begin{subfigure}[b]{0.485\linewidth}
		\centering
		\includegraphics[width=0.95\linewidth,trim=1.5cm 1.5cm 0 1cm,clip]{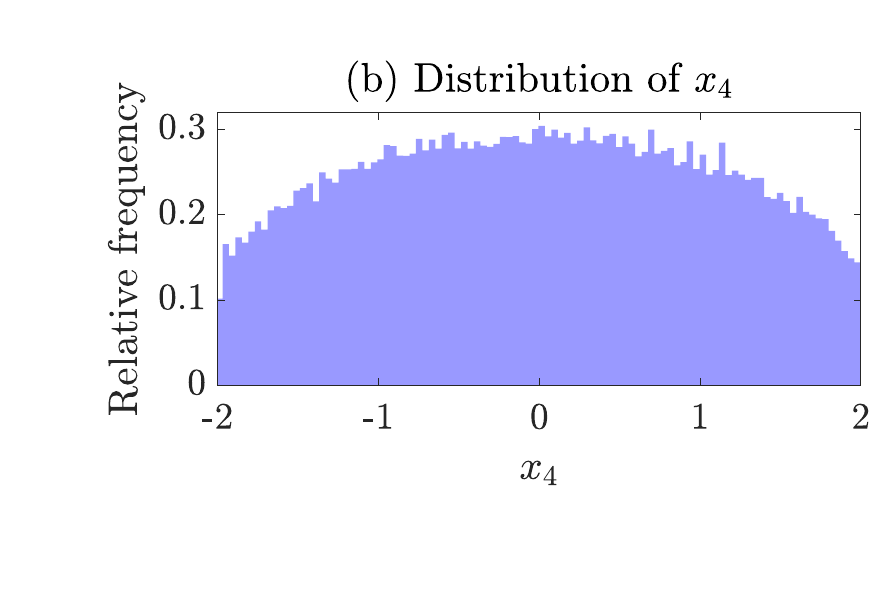}
	\end{subfigure}
	
	\vspace{0.2cm}
	
	\begin{subfigure}[b]{0.485\linewidth}
		\centering
		\includegraphics[width=0.95\linewidth,trim=1.5cm 1.5cm 0 1cm,clip]{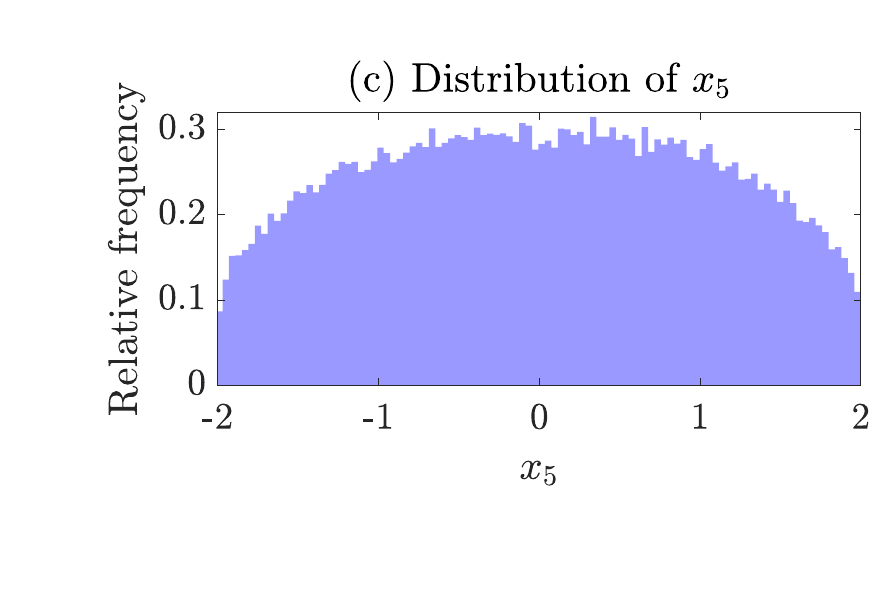}
	\end{subfigure}
	\begin{subfigure}[b]{0.485\linewidth}
		\centering
		\includegraphics[width=0.95\linewidth,trim=1.5cm 1.5cm 0 1cm,clip]{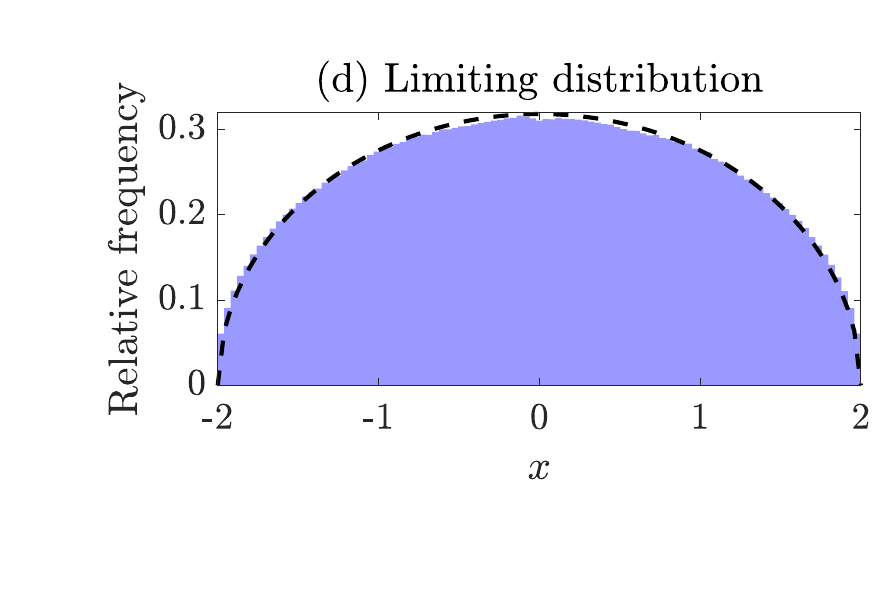}
	\end{subfigure}
	\caption{The statistics of the recursion relation (\ref{eq:recursion}) when the initial values $x_0$, $x_1$ and $x_2$ are independent and uniformly distributed on $[-2,2]$. The dashed lines in (a) and (d) show the distributions from equations \eqref{eq:indepdist} and (\ref{dist:semicircle}), respectively. The limiting distribution contains the first 200 terms. $10^5$ independent realisations of the initial conditions are used for each histogram. } \label{fig:indep}
\end{figure}

Suppose, now, that we allow $x_2$ to depend on the proceeding values $x_0$ and $x_1$ (which we will always suppose are independent and uniformly distributed on $[-2,2]$). The trend of values spreading across the interval $[-2,2]$ with a bias towards the centre persists for many choices of dependencies. For example, Figs.~\ref{fig:example_stats}(a)--(c) show the limiting statistics when $x_2$ is chosen as the arithmetic mean, geometric mean and maximum of $x_0$ and $x_1$, respectively. In each case, we see a distribution with its peak in a neighbourhood of zero.

\begin{figure}
	\centering
	\begin{subfigure}[b]{0.485\linewidth}
		\centering
		\includegraphics[width=0.95\linewidth,trim=1.5cm 1.5cm 0 1cm,clip]{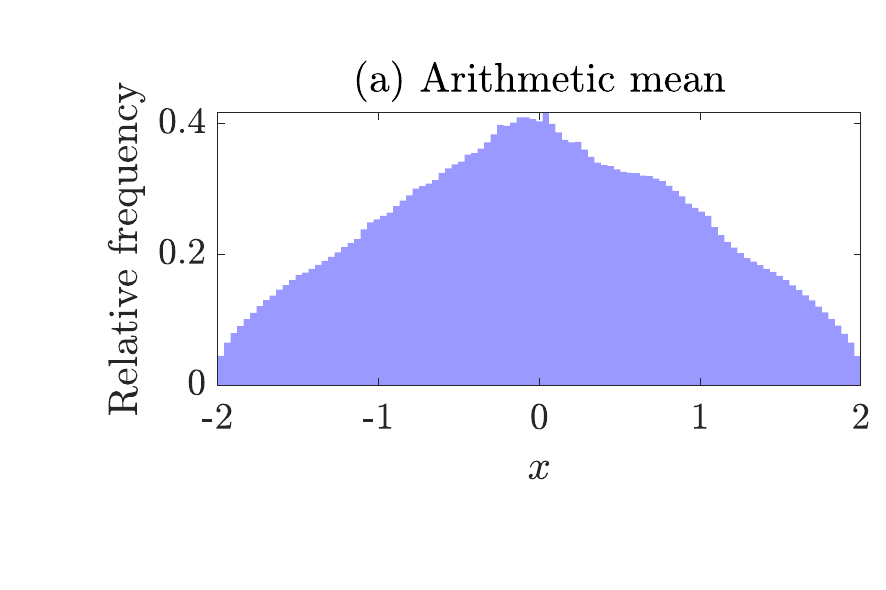}
	\end{subfigure}
	\begin{subfigure}[b]{0.485\linewidth}
		\centering
		\includegraphics[width=0.95\linewidth,trim=1.5cm 1.5cm 0 1cm,clip]{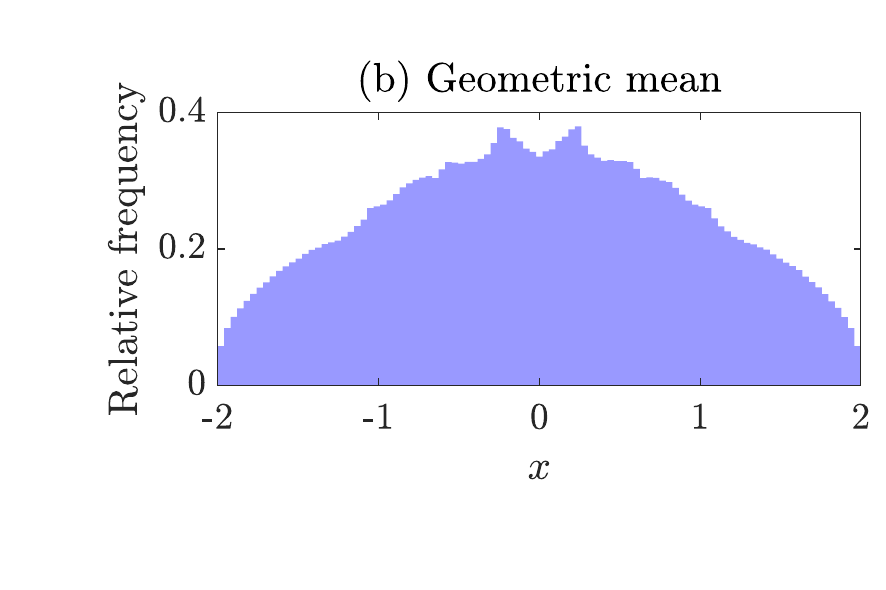}
	\end{subfigure}
	
	\vspace{0.2cm}
	
	\begin{subfigure}[b]{0.485\linewidth}
		\centering
		\includegraphics[width=0.95\linewidth,trim=1.5cm 1.5cm 0 1cm,clip]{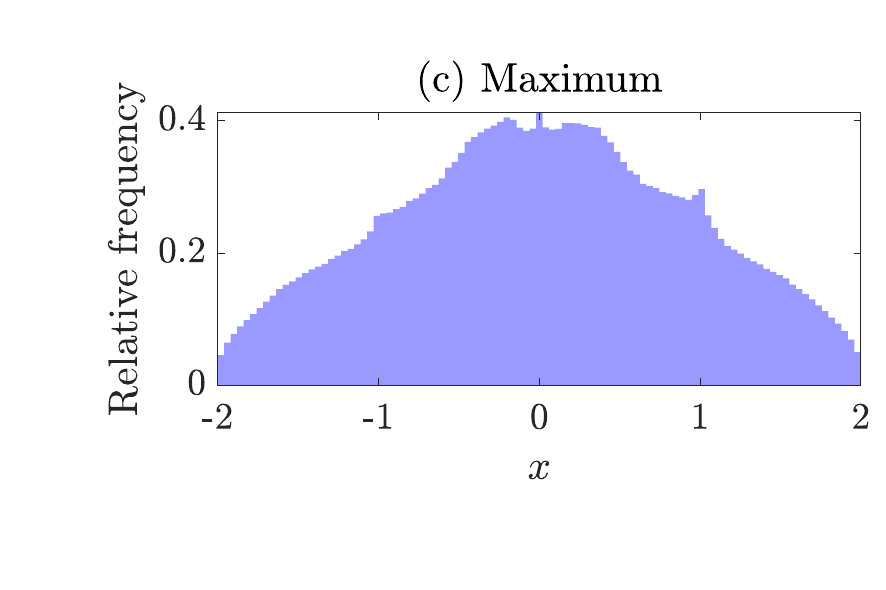}
	\end{subfigure}
	\begin{subfigure}[b]{0.485\linewidth}
		\centering
		\includegraphics[width=0.95\linewidth,trim=1.5cm 1.5cm 0 1cm,clip]{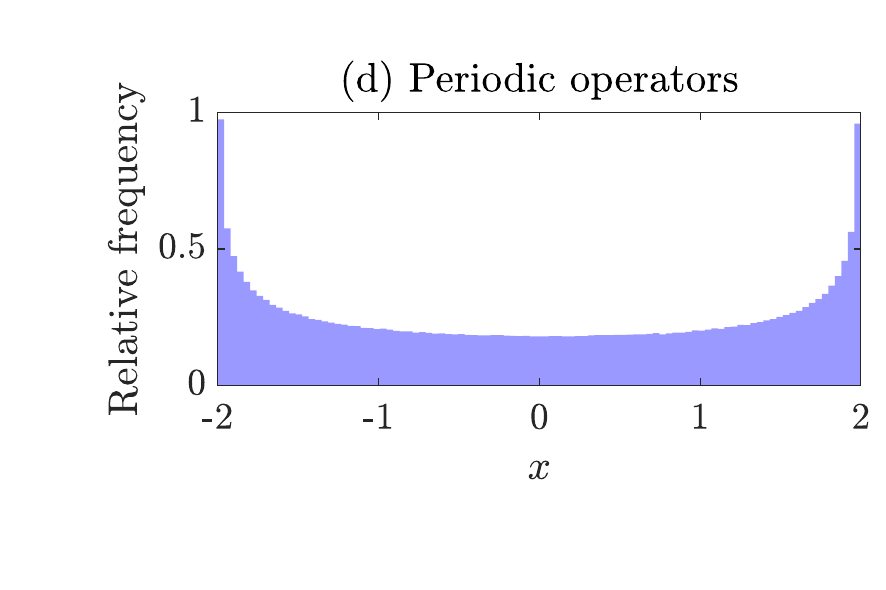}
	\end{subfigure}
	\caption{The statistics of the recursion relation (\ref{eq:recursion}) when $x_2$ is chosen to depend on $x_0$ and $x_1$, which are both independent and uniformly distributed on $[-2,2]$. In (a) $x_2=(x_0+x_1)/2$. In (b) $x_2=\mathrm{sgn}(x_0x_1)\sqrt{x_0x_1}$. In (c) $x_2=\max\{x_0,x_1\}$. In (d), the formula \eqref{eq:x2} is used. Each distribution contains the first 200 terms and $10^5$ independent realisations of the initial conditions.} \label{fig:example_stats}
\end{figure}

The behaviour of values clustering close to zero is not universal and certain choices for the dependence of $x_2$ on $x_0$ and $x_1$ lead to very different behaviour. The class of dependencies that we are interested in are choices that allow the system to be related to a sequence of periodic operators. Fig.~\ref{fig:example_stats}(d) shows an example of the statistics in this case, where we see that the behaviour is quite different as the values cluster strongly at the edges of the interval $[-2,2]$.

The class of dependencies (of $x_2$ on $x_0$ and $x_1$) that allow the system to be related to a sequence of periodic operators is broad and difficult to understand in general. To devise a suitable formula, we want to find two periodic differential operators whose monodromy matrices $M_0$ and $M_1$ (defined below) are such that $x_0=\tr(M_0)$ and $x_1=\tr(M_1)$. Then, the formula for $x_2$ is that we need $x_2=\tr(M_1M_0)$. Although this class is challenging to understand, in  general, specific examples are easy to generate. The example shown in Fig.~\ref{fig:example_stats}(d) is obtained by defining the quantities $\omega_0=\arccos(x_0/2)$ and $\omega_1=\arccos(x_1/2)$, then the formula of interest for $x_2$ is 
\begin{equation} \label{eq:x2}
	x_2 = 2\cos\omega_0\cos\omega_1-\big( \tfrac{\omega_0}{\omega_1} + \tfrac{\omega_1}{\omega_0} \big) \sin\omega_0\sin\omega_1.
\end{equation}
We will show, in Section~\ref{sec:constants}, that this corresponds to the canonical example of differential operators with piecewise constant coefficients. Another example, which uses Mathieu functions, is derived in Appendix~\ref{app:mathieu}.

In the case of the example \eqref{eq:x2}, the clustering is already partly visible in the distribution of $x_2$, shown in Fig.~\ref{fig:histograms}(a), where the values are skewed towards -2. Although this behaviour is difficult to estimate precisely, it is explored in more detail in Appendix~\ref{app:pdf_x2}. For subsequent iterations, the distribution is more symmetric, and the distribution eventually converges to a symmetric distribution with singularities at the edges of $[-2,2]$, shown in Fig.~\ref{fig:histograms}(d). To understand this, and derive the analytic estimate shown by a dotted line in Fig.~\ref{fig:histograms}(d), we will explore the underlying link with periodic differential operators.

%Even calculating the distribution of $x_2$ to describe the shape of Fig.~\ref{fig:histograms}(a), analogous to \eqref{eq:indepdist} for the case of fully independent initial conditions, is less straightforward in this case, due to the intricate nature of the dependence (\ref{eq:x2}). However, the crucial feature to observe in Fig.~\ref{fig:histograms}(a) is the clustering of values around -2, due to the corresponding clustering of values of the terms in (\ref{eq:x2}) (this is explored in more detail in Appendix~\ref{app:pdf_x2}). 

%The main feature that is apparent from Fig.~\ref{fig:histograms} is the tendency for states to cluster around the edges of the interval $[-2,2]$. The long-time distribution, in Fig.~\ref{fig:histograms}(d), shows a symmetric distribution that appears to display singularities at $x=\pm2$. The main goal of this work is to describe this distribution by exploiting links with the densities of states of a related sequence of periodic operators. Then, the singular nature of the distribution at $x=\pm2$ can be related to the well-known \textit{van Hove singularities} \cite{van1953occurrence}, which will be introduced in Section~\ref{sec:dos}. 

\begin{figure}
	\centering
	\begin{subfigure}[b]{0.485\linewidth}
		\centering
		\includegraphics[width=0.95\textwidth,trim=1.5cm 1.5cm 0 1cm,clip]{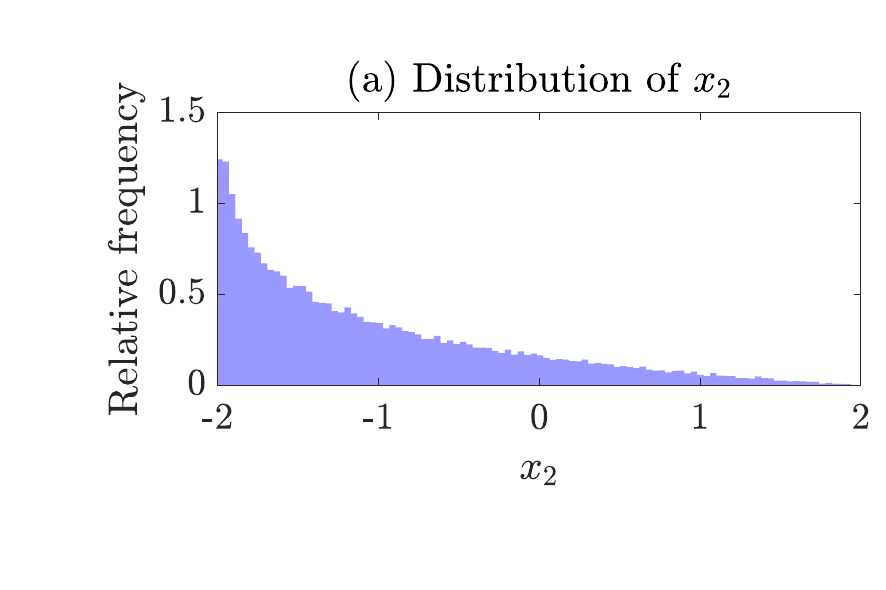}
	\end{subfigure}
	\begin{subfigure}[b]{0.485\linewidth}
		\centering
		\includegraphics[width=0.95\textwidth,trim=1.5cm 1.5cm 0 1cm,clip]{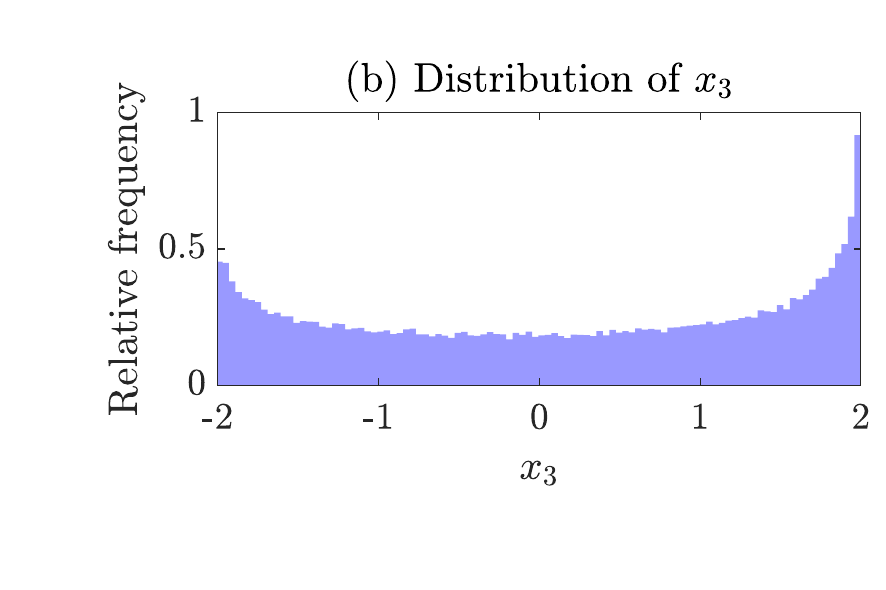}
	\end{subfigure}
	
	\vspace{0.2cm}
	
	\begin{subfigure}[b]{0.485\linewidth}
		\centering
		\includegraphics[width=0.95\textwidth,trim=1.5cm 1.5cm 0 1cm,clip]{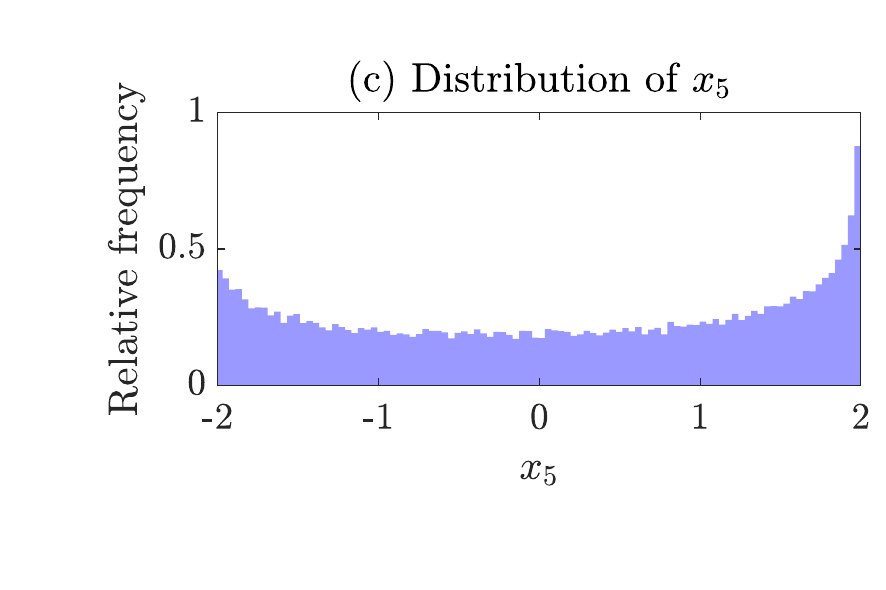}
	\end{subfigure}
	\begin{subfigure}[b]{0.485\linewidth}
		\centering
		\includegraphics[width=0.95\textwidth,trim=1.5cm 1.5cm 0 1cm,clip]{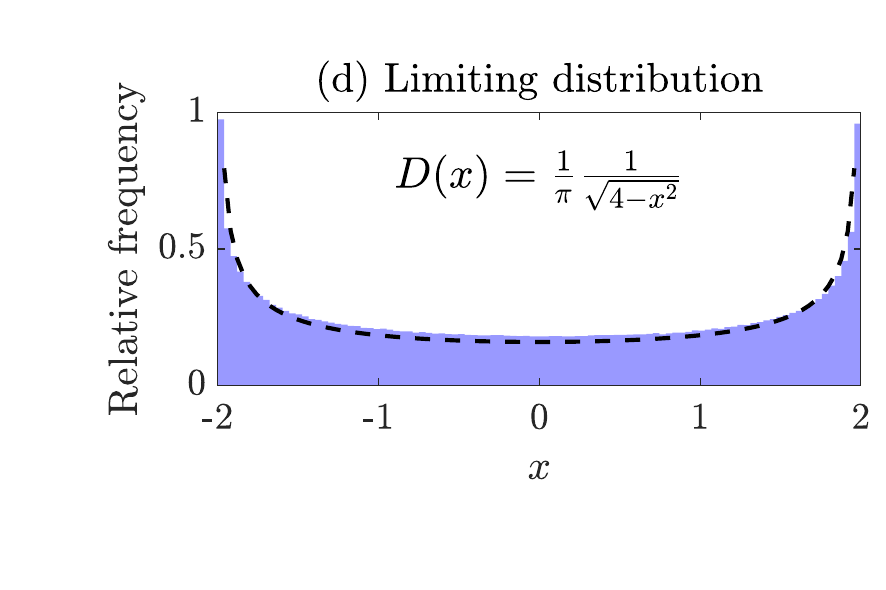}
	\end{subfigure}
	\caption{The distribution of states of the recursion relation (\ref{eq:recursion}) when the initial values $x_0$ and $x_1$ are independent and uniformly distributed on $[-2,2]$ and $x_2$ is chosen to depend on $x_0$ and $x_1$ according to (\ref{eq:x2}). The limiting statistics contain the first 200 terms and $10^5$ independent realisations of the initial conditions. The dashed line in (d) is the analytic formula \eqref{eq:main} obtained from density of states calculations.} \label{fig:histograms}
\end{figure}

\section{Periodic operators} \label{sec:periodic}

Consider the differential eigenvalue problem 
\begin{equation} \label{eq:eigenproblem}
	\Hill_L u=\lambda u, \qquad
		\Hill_L=\dd{^2}{x^2}+V(x),
\end{equation}
where  $V:\mathbb{R}\to\mathbb{R}$ is some  piecewise smooth and $L$-periodic potential function. A key tool for understanding the solution space of (\ref{eq:eigenproblem}) is the \emph{monodromy matrix}. Given the basis functions $\phi(x,\lambda)$ and $\psi(x,\lambda)$, which are defined as solutions to (\ref{eq:eigenproblem}) satisfying the conditions $\phi(0,\lambda)=\partial_x \psi(0,\lambda)=1$ and $\partial_x \phi(0,\lambda)= \psi(0,\lambda)=0$, the monodromy matrix $M_L$ is defined as
\begin{equation} \label{eq:monod_mat}
	M_L(\lambda):=\begin{pmatrix} \phi(L,\lambda) & \psi(L,\lambda) \\ \partial_x\phi(L,\lambda) & \partial_x\psi(L,\lambda) \end{pmatrix}.
\end{equation}
$M_L(\lambda)\in\mathrm{SL}(2,\mathbb{R})$ is a square matrix with determinant equal to 1.

The crucial information about the spectrum of the differential operator $\Hill_L$ is given by the eigenvalues of $M_L$, which are in turn determined by its trace. As a result, the trace of the monodromy matrix
\begin{equation} \label{eq:discriminant}
	\Delta_L(\lambda) := \mathrm{tr} \, M_L(\lambda) = \phi(L,\lambda)+\partial_x\psi(L,\lambda),
\end{equation}
will be the key quantity arising from the analysis of these operators. It is often known as the \emph{discriminant} of the operator $\Hill_L$. %When the discriminant is such that $|\Delta_L(\lambda)|\geq2$, its eigenvalues are real and satisfy $\mu_1=\mu_2^{-1}$. Conversely, when $|\Delta_L(\lambda)|<2$, $M_L(\lambda)$ has complex-conjugate eigenvalues $\mu_1=\mu_2^*$ satisfying $|\mu_1|=1=|\mu_2|$.
The dispersion relation of the differential eigenvalue problem \eqref{eq:eigenproblem} is 
\begin{equation} \label{eq:dispersion1}
	\det\left( M_L(\lambda) - e^{\i k L}I \right)=0,
\end{equation}
where $k$ is the Bloch momentum.  Using the fact that $\det(M_L)=1$, this is equivalent to
\begin{equation} \label{eq:dispersion}
	2\cos(k L) = \Delta_L(\lambda),
\end{equation}
from which it is clear that \eqref{eq:dispersion1} admits a real-valued solution $\lambda(k)$ for real $k$ if and only if $\Delta_L(\lambda)\in[-2,2]$. The function $k\mapsto \lambda(k)$ of solutions is many valued and the branches are the \textit{spectral band functions} $\lambda_j(k)$ depicted, for example, in Fig.~\ref{fig:hero}(a).

The reappearance of the interval $[-2,2]$ in the condition $\Delta_L(\lambda)\in[-2,2]$ for solutions to the dispersion relation \eqref{eq:dispersion} is not a coincidence. Our objective now is to build a sequence of differential operators whose discriminants satisfy the recursion relation \eqref{eq:recursion}. Then, the appropriate formula for the dependence of $x_2$ on $x_0$ and $x_1$ will follow from the choice of $V$ in \eqref{eq:eigenproblem} (and the associated monodromy matrix).

A sequence of differential operators whose discriminants satisfy the recursion relation \eqref{eq:recursion} can be built using a \emph{Fibonacci tiling rule}, based on insight from \cite{kohmoto1983localization}. In this rule, we will define a sequence of periodic potentials $V_n(x)$, $n=0,1,2,\dots$ iteratively. Suppose that we have two initial potentials $V_0(x)$ and $V_1(x)$. Then, we can form a new periodic potential whose unit cells is the combination of the unit cells from  $V_0(x)$ and $V_1(x)$ side by side, as depicted in Fig.~\ref{fig:Fibonacci}. If we repeat the pattern of forming new periodic unit cells by combining the previous two, then we are applying a \emph{Fibonacci tiling rule}. This can, equivalently, be viewed as a two-letter substitution rule given by $A\to AB$ and $B\to A$ \cite{kolavr1993new, levine1984quasicrystals}. When initiated with the starting value of $\F_1=A$ (or, equivalently, $\F_0=B$), this gives the sequence of words shown in Fig.~\ref{fig:Fibonacci}. We use this to construct periodic potentials by assigning the value of $V(x)$ on each section of unit size according to the corresponding letter in the tiling rule. The length of each unit cell in the sequence follows the Fibonacci sequence. The link with \eqref{eq:recursion} was first discovered by \cite{kohmoto1983localization} and is straightforward to show from the observation that monodromy matrices for periodic systems combined in this way are related by $M_{F_n}=M_{F_{n-2}}M_{F_{n-1}}$. Rearranging this to give $M_{F_{n-2}}^{-1}=M_{F_{n-1}}M_{F_n}^{-1}$ and adding it to $M_{F_{n+1}}=M_{F_{n-1}}M_{F_{n}}$ gives 
\begin{equation}
	M_{F_{n+1}}+M_{F_{n-2}}^{-1}=M_{F_{n-1}}M_{F_{n}}+M_{F_{n-1}}M_{F_n}^{-1}.
\end{equation}
Taking the trace and using the fact that $\tr(M_{F_{n-1}}M_{F_{n}})+\tr(M_{F_{n-1}}M_{F_n}^{-1}) =\tr(M_{F_{n-1}})\tr(M_{F_{n}})$ gives \eqref{eq:recursion}, with $x_n=\tr(M_{F_n})$. Now that we have constructed a sequence of differential operators that satisfies the dynamical system of interest, we can use their densities of states to predict the statistics. 

\section{Canonical example} \label{sec:constants}

Before computing density of states, we pause to examine some specific examples of dependencies (of $x_2$ on $x_0$ and $x_1$) that allow the system to be related to a sequence of periodic operators. The formula \eqref{eq:x2} arises by considering the canonical example of constant potentials $V_0(x)=A>0$ and $V_1(x)=B>0$. These are, trivially, 1-periodic and their monodromy matrices are given by
\begin{equation}
	M_{F_j}=\begin{pmatrix} \cos\omega_j & \frac{1}{\omega_j}\sin\omega_j \\ -\omega_j\sin\omega_j & \cos\omega_j \end{pmatrix}, \quad \text{where } j=0,1,
%	\quad\text{and}\quad
%	M_{F_1}=\begin{pmatrix} \cos\omega_1 & \frac{1}{\omega_1}\sin\omega_1 \\ -\omega_1\sin\omega_1 & \cos\omega_1 \end{pmatrix},
\end{equation}
with $\omega_0=\sqrt{B-\lambda}$ and $\omega_1=\sqrt{A-\lambda}$. Here $F_0=1$ and $F_1=1$ are taken as the first two (nonzero) Fibonacci numbers. The associated discriminants are given by
\begin{equation}
	\Delta_{F_0}=2\cos\omega_0
	\quad\text{and}\quad
	\Delta_{F_1}=2\cos\omega_1.
\end{equation}
Suppose we form a new material which is 2-periodic and alternates between the values $A$ and $B$ on each unit-sized section (as sketched in Fig.~\ref{fig:Fibonacci}). We can compute the monodromy matrix of this 2-periodic system through the product $M_{F_2}=M_{F_0}M_{F_1}$. Taking the trace of $M_{F_2}$ gives the discriminant
\begin{equation}
	\Delta_{F_2} = 2\cos\omega_0\cos\omega_1-\big( \tfrac{\omega_0}{\omega_1} + \tfrac{\omega_1}{\omega_0} \big) \sin\omega_0\sin\omega_1.
\end{equation}
This is the formula that was specified in \eqref{eq:x2} to fix the dependence of $x_2$ on $x_1$ and $x_0$. Thus, by choosing this relationship between the initial values, it is guaranteed that $x_2=\Delta_{F_2}(\lambda)$.

There are many other formulas that would similarly allow for the orbits of \eqref{eq:recursion} to be related to a sequence of periodic operators. Indeed, any choice for $V_0$ and $V_1$ would yield such a formula (although it may not admit a concise analytic expression). An example that uses sinusoidal potentials to yield a formula in terms of Mathieu functions is derived in Appendix~\ref{app:mathieu}.

\begin{figure}
	\includegraphics[width=\linewidth]{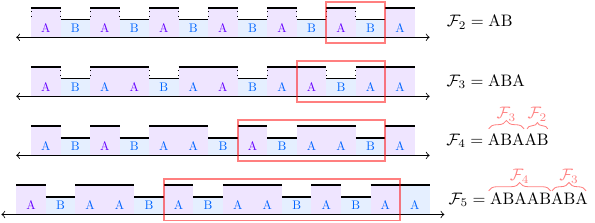}
	\caption{The Fibonacci tiling rule. Each new periodic potential is formed by combining the previous two unit cells in the sequence.} \label{fig:Fibonacci}
\end{figure}

\section{Density of states} \label{sec:density}

The density of states for a periodic differential operator is well studied. The density $D(\lambda)$ quantifies the number of states per unit energy. For an infinitely periodic system it is typically defined in terms of the limit of finite systems with arbitrarily large size \cite{kuchment2016overview}. At a given value $\lambda$, the density of states is the reciprocal of the slope of the spectral band function at that point:
\begin{equation} \label{eq:DoS}
	D(\lambda)=\frac{1}{2\pi} \left|\dd{}{k}\lambda_j|_{\lambda_j=\lambda}\right|^{-1},
\end{equation}
That there can only be at most one point on the dispersion curves where $\lambda_j=\lambda$ is a property of one-dimensional systems \cite{reed1978iv}.

From \eqref{eq:DoS}, it is apparent that singularities will occur whenever a band function $k\mapsto\lambda_j(k)$ has zero slope. These are known as \textit{van Hove singularities} \cite{van1953occurrence}. In general, for one-dimensional operators this will happen at the edges of the reduced Brillouin zone. This can be seen by differentiating the dispersion relation (\ref{eq:dispersion}) with respect to $k$ to give
\begin{equation}
	-2L\sin(kL)=\dd{\Delta_L}{\lambda}\dd{\lambda}{k},
\end{equation}
from which it is clear that $\mathrm{d}\lambda/\mathrm{d}k=0$ if and only if $k\in\{0,\pm\pi/L\}$ (whenever $\mathrm{d}\Delta_L/\mathrm{d}\lambda$ is non-zero).

Since the values of the discriminants associated to eigenstates of the periodic operators can be related to the orbits of the recursion relation \eqref{eq:recursion}, we would like to use similar tools to characterise its statistics. We can use our understanding of the map $\lambda\mapsto\Delta_L$ to identify the value of the discriminant at each point on one of the spectral band functions $\lambda_j$. In light of this, the integrated density of states (IDS) $\nu(\Delta)$ associated to the value of the discriminant can be written as 
\begin{equation} \label{eq:proof1}
	\nu(\Delta) = \sum_{j=1}^\infty \mu\{k\in\mathcal{B}:\Delta_L(\lambda_j(k))<\Delta\},
\end{equation}
where $\mu$ is the normalised Haar measure $(2\pi)^{-1}\mathrm{d} k$. Since the band functions are monotonic, $\Delta_L$ is a well-defined function of $k\in\mathcal{B}$ and the dependence on $\lambda_j$ in \eqref{eq:proof1} can be dropped. Hence, (\ref{eq:proof1}) can be simplified to
\begin{equation} \label{eq:proof2}
	\nu(\Delta) = \mu\{k\in\mathcal{B}:\Delta_L(k)<\Delta\},
\end{equation}
for $-2\leq\Delta\leq2$.

The density of states can be calculated as the derivative of the IDS \eqref{eq:proof2}. This gives
\begin{equation} \label{eq:proof3}
	D(\Delta)=\dd{\nu}{\Delta}=\frac{1}{2\pi} \left|\dd{\Delta_L}{k}\right|^{-1}.
\end{equation}
Differentiating the dispersion relation $\Delta_L=2\cos kL$ with respect to $k$ gives
\begin{equation*}
	\dd{\Delta_L}{k}=-2L\sin kL=-L\sqrt{4-4\cos^2kL}=-L\sqrt{4-\Delta_L^2}.
\end{equation*}
Substituting this into \eqref{eq:proof3} and normalising to give a density function on the interval $(-2,2)$ gives a formula for the density of states:
\begin{equation} \label{eq:main}
	D(x)=\frac{1}{\pi}\frac{1}{\sqrt{4-x^2}},
\end{equation}
for $-2<x<2$. This density function is plotted (dashed line) in Fig.~\ref{fig:histograms}(d) where we see excellent agreement with the limiting statistics of the chaotic dynamical system.

\section{Conclusions}

This work showed that constructing a sequence of periodic differential operators associated to a nonlinear recursion relation provides an avenue for deriving an explicit formula for the limiting statistics. For the example studied here, this formula was shown to predict the statistics from numerical experiments accurately. This result also offers an explanation for the clustering of states of the chaotic recursion relation near to the identifiable critical points at $x=\pm2$, as it illuminates an equivalence with Van Hove singularities in the densities of states of the associated operators.

We envisage that this strategy could be extended to other systems in future work. It was recently shown that a sequence of periodic operators can be constructed in association with the chaotic logistic map \cite{davies2024spectral}. As a result, the familiar singularities in the well-known invariant measure of that system \cite{jakobson1981absolutely, sethna2021statistical} are also equivalent to van Hove singularities. Other straightforward generalisations could also be generated easily. For example, the method could be generalised easily to the nonlinear recursion relations associated to the \emph{generalised} Fibonacci tiling rules\footnote{See \cite{kola1990one} for details of the generalised Fibonacci tiling rules, \cite{gumbs1988dynamical, gumbs1988scaling, gumbs1989electronic} for the basic spectral properties of the associated operators and \cite{davies2023super} for analysis of the escape criteria.}. However, generalising this work to a wider set of nonlinear recursion relation is a challenging open question for future study.

%Maybe mention the lifting operator to higher-dimensional space?

\appendix
\section{Escape criterion} \label{app:escape}

We present the following modification of the result proved in \cite[Theorem 3.1]{davies2023super}, to show that the growth condition \eqref{eq:growth} is an escape criterion for the system.

\begin{lemma} %\label{lem:divergence}
	Suppose that $\{x_n:n\in\Znn\}$ satisfies the recursion relation (\ref{eq:recursion}). If there exists some $N\in\Znn$ such that 
	$$ 2<|x_N|, \quad |x_N|<|x_{N+1}|, \quad |x_{N+1}|<|x_{N+2}|,$$
	then $|x_n|>2$ for all $n\geq N$ and $|x_n|\to\infty$ as $n\to\infty$.
\end{lemma}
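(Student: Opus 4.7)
The plan is to prove the result by a single strong induction that establishes the chain of strict inequalities $2<|x_N|<|x_{N+1}|<|x_{N+2}|<|x_{N+3}|<\cdots$ all at once, and then to conclude divergence by a short contradiction argument using the monotonicity.

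For the induction, I would take as hypothesis the statement that for some $k\geq 2$ one has $2<|x_N|<|x_{N+1}|<\cdots<|x_{N+k}|$ (the case $k=2$ is the assumption of the lemma). To advance to $k+1$, the reverse triangle inequality applied to the recursion (\ref{eq:recursion}) gives
\begin{equation*}
|x_{N+k+1}| \;=\; |x_{N+k}x_{N+k-1}-x_{N+k-2}| \;\geq\; |x_{N+k}|\,|x_{N+k-1}|-|x_{N+k-2}|.
\end{equation*}
The inductive hypothesis supplies two key facts: first, $|x_{N+k-1}|>2$, so $|x_{N+k}|\,|x_{N+k-1}|>2|x_{N+k}|$; and second, $|x_{N+k-2}|<|x_{N+k}|$. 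Combining these gives $|x_{N+k+1}|>2|x_{N+k}|-|x_{N+k}|=|x_{N+k}|$, which closes the induction and in particular ensures $|x_{N+k+1}|>2$.

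For the divergence claim, since $(|x_n|)_{n\geq N}$ is now known to be strictly increasing and bounded below by $2$, it either diverges or converges to some finite $L\geq |x_N|>2$. In the latter case, passing to the limit in the inequality $|x_{n+1}|\geq |x_n|\,|x_{n-1}|-|x_{n-2}|$ yields $L\geq L^2-L$, i.e.\ $L\leq 2$, which contradicts $L>2$. Hence $|x_n|\to\infty$.

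There is no real obstacle here: the only subtlety is recognising that both ingredients of the inductive step (the strict lower bound $|x_{N+k-1}|>2$ and the comparison $|x_{N+k-2}|<|x_{N+k}|$) must be carried along as part of the inductive hypothesis, which is why I would phrase the induction as a single chain of strict inequalities rather than trying to prove ``$|x_n|>2$'' and ``$|x_n|$ increasing'' as two separate statements.
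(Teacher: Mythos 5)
Your proof is correct, and the heart of it coincides with the paper's: both arguments rest on the reverse triangle inequality $|x_{n+1}|\geq|x_n|\,|x_{n-1}|-|x_{n-2}|$ together with the two facts you rightly insist must be carried through the induction ($|x_{n-1}|>2$ and $|x_{n-2}|<|x_n|$), and phrasing the induction as the single chain $2<|x_N|<|x_{N+1}|<\cdots$ is a clean way of making explicit what the paper compresses into ``proceeding by induction''. Where you genuinely differ is in how divergence is concluded. The paper extracts a quantitative rate: writing $|x_{N+2}|\,|x_N|-|x_N|=|x_{N+2}|(|x_N|-1)+(|x_{N+2}|-|x_N|)$ it obtains $|x_{N+2+n}|>|x_{N+2}|\,(|x_N|-1)^n$ with ratio $|x_N|-1>1$, so $|x_n|\to\infty$ is immediate and in fact geometric. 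You instead stop at strict monotonicity and exclude a finite limit $L>2$ by passing to the limit in $|x_{n+1}|\geq|x_n|\,|x_{n-1}|-|x_{n-2}|$, getting $L\geq L^2-L$, hence $L\leq 2$, a contradiction; this is valid (all three shifted subsequences converge to the same $L$ by monotone convergence). The trade-off is that your soft argument is slightly more economical in bookkeeping but yields no growth rate, whereas the paper's version buys an explicit exponential lower bound on how fast escaping orbits grow.
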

\begin{proof}
	It is easy to see that 
	\begin{align*}
	|x_{N+3}|&\geq |x_{N+2}||x_{N+1}|-|x_N| \\
	&\geq |x_{N+2}||x_{N}|-|x_N| \\
	& = |x_{N+2}|(|x_{N}|-1)+(|x_{N+2}|-|x_N|).
	\end{align*}
	By hypothesis, $|x_{N+2}|>|x_N|$ meaning that $|x_{N+3}|>|x_{N+2}|(|x_{N}|-1),$
	where $(|x_{N}|-1)>1$. Proceeding by induction, it can be shown that
	\begin{equation}
		|x_{N+2+n}|>|x_{N+2}|(|x_{N}|-1)^n,
	\end{equation}
	for all $n\in\Zp$, from which it is clear that $|x_n|\to\infty$ as $n\to\infty$.
\end{proof}

\section{Derivation of $f_{X_3}$} \label{app:pdf_indep}

In this appendix, the probability density function $f_{X_3}$ given in \eqref{eq:indepdist} is calculated, where $X_3=X_2X_1-X_0$ with $X_0$, $X_1$ and $X_2$ being independent and identically distributed random variables, drawn from uniform distributions on the interval $[-2,2]$.

First, the density function of the product $X_1X_2$ can be calculated using elementary properties of conditional probability. In particular, it holds that
\begin{equation} \label{eq:B1}
	f_{X_2X_1}(x)=\int_{-\infty}^\infty f_{X_1}(y) f_{X_2}(x/y) \frac{1}{|y|} \de y.
\end{equation}
Here, $f_{X_1}(y)=1/4$ for $|y|\leq2$ and is zero otherwise. Similarly, $f_{X_2}(x/y)=1/4$ for $|x/y|<2$. Hence, the integrand in \eqref{eq:B1} is non-zero when $|x|/2\leq |y| \leq 2$. Using symmetry of the integrand, this leads to
\begin{align}
	f_{X_2X_1}(x)&=2\int_{|x|/2}^2 \frac{1}{4}\cdot \frac{1}{4}\cdot \frac{1}{y} \de y \nonumber \\
	&=\frac{1}{8}\left(\log2-\log\frac{|x|}{2}\right) \nonumber \\
	&=-\frac{1}{8}\log\left(\frac{|x|}{4}\right), \label{eq:pdf_X1X2}
\end{align}
for $-4\leq x\leq 4$ and vanishes outside of this interval.

Since $X_2X_1$ and $X_0$ are independent, the density function of their difference $X_3=X_2X_1-X_0$ is given by the convolution of $f_{X_2X_1}$ and $f_{X_0}$:
\begin{align*}
	f_{X_3}(x)&=\int_{-\infty}^{\infty} f_{X_2X_1}(x-z) f_{X_0}(z) \de z\\ &= -\frac{1}{32}\int_{-2}^{2} \log\left(\frac{|x-z|}{4}\right)\chi_{(-4,4)}(x-z) \de z.
\end{align*}
If $x,z\in[-2,2]$, then $|x-z|\leq 4$, so $\chi_{(-4,4)}(x-z)=1$. Thus, for $-2\leq x\leq 2$ it holds that
\begin{equation}
	f_{X_3}(x)=-\frac{1}{32}\int_{-2}^{2} \log\left(\frac{|x-z|}{4}\right) \de z.
\end{equation}
Partitioning the integral into $[-2,x]$ and $[x,2]$ gives
\begin{equation*}
	f_{X_3}(x)=-\frac{1}{32}\int_{-2}^{x} \log\left(\tfrac{x-z}{4}\right) \de z-\frac{1}{32}\int_{x}^{2} \log\left(\tfrac{z-x}{4}\right) \de z,
\end{equation*}
where each of the two integrals can be solved using integration by parts to obtain the formula stated in \eqref{eq:indepdist}:
\begin{equation*}
	f_{X_3}(x)= \frac{x-2}{32}\log\left(\frac{2-x}{4}\right) - \frac{x+2}{32}\log\left(\frac{x+2}{4}\right) + \frac{1}{8},
\end{equation*}
for $-2\leq x\leq2$.

\section{Mathieu functions} \label{app:mathieu}

Another reasonable choice of potential function is piecewise sinusoidal on each of the two unit cells to be tiled. Consider, for example, the $\pi$-periodic potential
\begin{equation}
	V_q(x)=\alpha-2q\cos(2x),
\end{equation}
for $\alpha, q\in\mathbb{R}$. Then the monodromy matrix can be written in terms of the Mathieu functions $C(a,q,x)$ and $S(a,q,x)$, which are defined as the even and odd solutions to the Mathieu equation $u''(x)+(a-2q\cos(2x))u(x)=0$ (see e.g.
\cite{abramowitz1948handbook} for their properties). By the symmetry of the equation, we have that $C'(a,q,0)=0$ and $S(a,q,0)=0$, so we can choose the normalisation to be such that they are exactly the basis functions: $\phi=C$ and $\psi=S$. Then, the monodromy matrix is given by
\begin{equation} \label{eq:monod_mathieu}
%	\begin{split}
	M_\pi^{(q)}(\lambda)=
	\begin{pmatrix} C(\alpha-\lambda,q,\pi) & S(\alpha-\lambda,q,\pi) \\ C'(\alpha-\lambda,q,\pi) & S'(\alpha-\lambda,q,\pi) \end{pmatrix} \\
%	&\qquad \times \begin{pmatrix} C(-\lambda,q,0) & S(-\lambda,q,0) \\ C'(-\lambda,q,0) & S'(-\lambda,q,0) \end{pmatrix}^{-1}
%		I_2,
%	\end{split}
\end{equation}
where $C'$ and $S'$ denote the derivatives with respect to the position variable $x$. % and $I_2$ is the $2\times2$ identity matrix. 

Given two initial values $x_0,x_1\in[-2,2]$, we need to find $\lambda$, $\alpha$, $q_0$ and $q_1$ such that
\begin{equation*}
	x_n=\tr\left( M_\pi^{(q_n)}(\lambda) \right)
	=C(\alpha{-}\lambda,q_n,\pi)+S'(\alpha{-}\lambda,q_n,\pi).
\end{equation*}
This is always possible, not least since the range of this function is larger than $[-2,2]$. Then, the required formula for $x_2$ is
\begin{align}
	&x_2=\tr\left(  M_\pi^{(q_1)}(\lambda) M_\pi^{(q_0)}(\lambda) \right) \\
	&= C(q_1)C(q_0){+}S(q_1)C'(q_0){+}C'(q_1)S(q_0)  {+}S'(q_1)S'(q_0), \nonumber
\end{align}
where we have written $C(q_n):=C(\alpha-\lambda,q_n,\pi)$, and similarly for $S$, for brevity.

We can apply the Fibonacci tiling rule to two unit cells corresponding to $q=q_0$ and $q=q_1$. Then, the successive transfer matrices will satisfy the properties explored above, meaning that their traces satisfy the recursion relation \eqref{eq:recursion} and the density of states converges to the distribution \eqref{eq:main}.

\section{Distribution of $x_2$} \label{app:pdf_x2}

This appendix studies the distribution of the random variable $X_2$, given by
\begin{equation} \label{eq:X3}
	X_2 = 2\cos\Omega_0\cos\Omega_1-\left( \frac{\Omega_0}{\Omega_1} + \frac{\Omega_1}{\Omega_0} \right) \sin\Omega_0\sin\Omega_1,
\end{equation}
where $\Omega_0$ and $\Omega_1$ are given by
\begin{equation}
	\Omega_0=\arccos(X_0/2), \quad \Omega_1=\arccos(X_1/2),
\end{equation}
with $X_0$ and $X_1$ being independent and identically distributed random variables, obeying uniform distributions on the interval $[-2,2]$. It will not be possible to derive an explicit formula for the distribution of $X_2$, but the features of its distribution (shown in Fig.~\ref{fig:apphists}(e)) can be better understood by examining each term in (\ref{eq:X3}).

The first term of (\ref{eq:X3}) can be understood easily, since $2\cos\Omega_0\cos\Omega_1=\frac{1}{2}X_0X_1$. Thus, it is possible to rescale the density function (\ref{eq:pdf_X1X2}) to see that $2\cos\Omega_0\cos\Omega_1$ has probability density function $f_{2\cos\Omega_0\cos\Omega_1}$ given by
\begin{equation} \label{eq:coscosdist}
	f_{2\cos\Omega_0\cos\Omega_1}(x)=-\frac{1}{4}\log\left(\frac{|x|}{2}\right),
\end{equation}
for $-2\leq x\leq 2$, vanishing outside of this interval. This distribution is shown in Fig.~\ref{fig:apphists}(a), where the histogram shows $10^6$ independent realisations of $2\cos\Omega_0\cos\Omega_1$ and the dotted line shows the density function (\ref{eq:coscosdist}).

\begin{figure*}
	\begin{subfigure}[b]{0.3\textwidth}
		\includegraphics[width=0.95\linewidth]{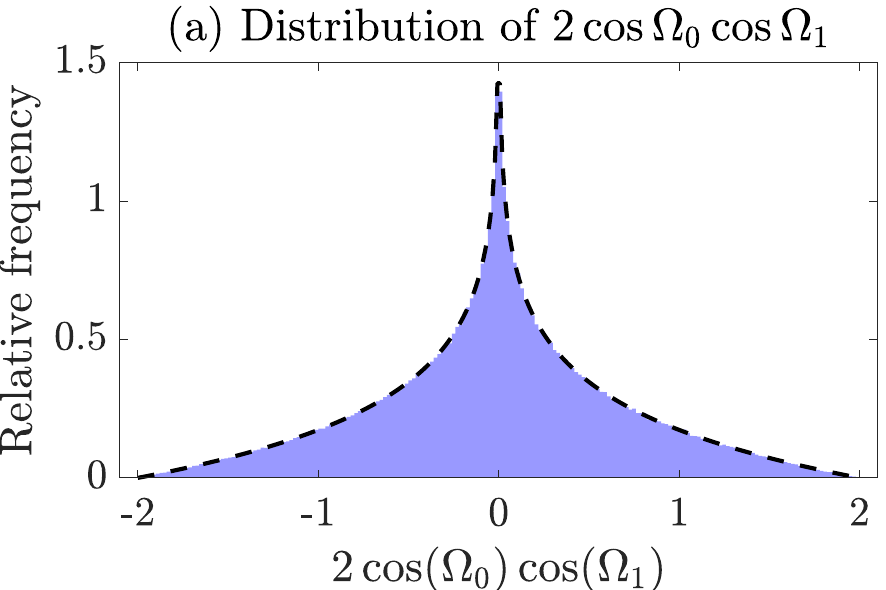}
	\end{subfigure}
	\begin{subfigure}[b]{0.3\textwidth}
		\includegraphics[width=0.95\linewidth]{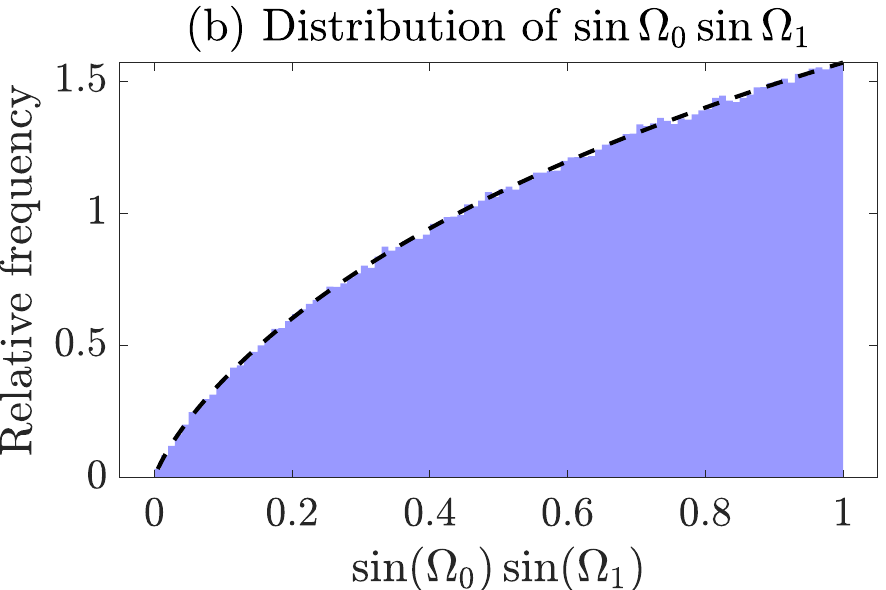}
	\end{subfigure}
	\begin{subfigure}[b]{0.3\textwidth}
		\includegraphics[width=0.95\linewidth]{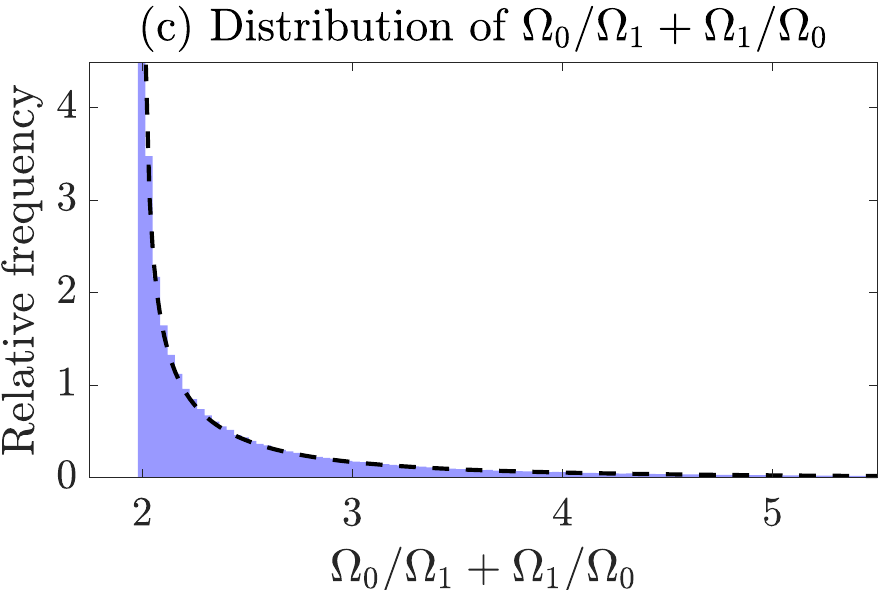}
	\end{subfigure}
	
	\vspace{0.4cm}

	\begin{subfigure}[b]{0.3\textwidth}
		\includegraphics[width=0.95\linewidth]{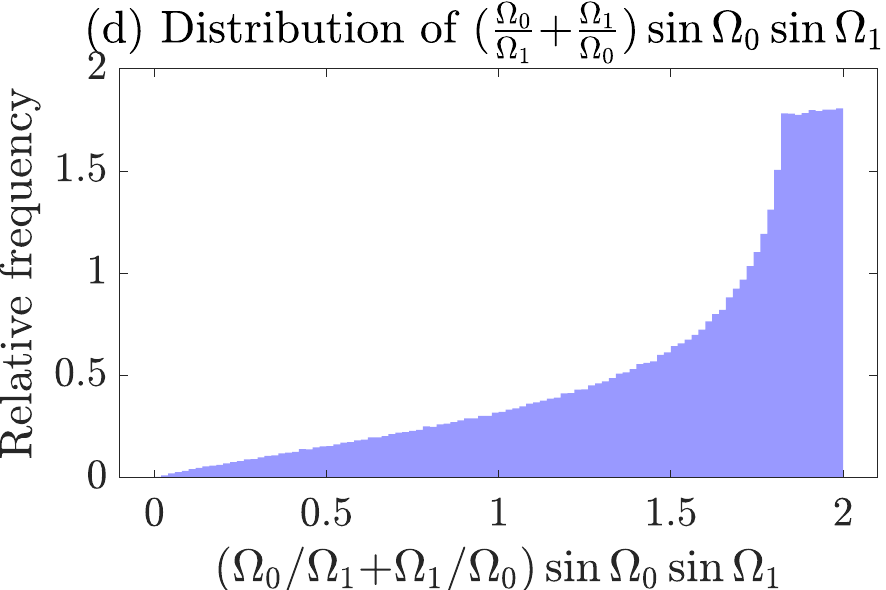}
	\end{subfigure}
	\begin{subfigure}[b]{0.6\textwidth}
		\includegraphics[width=0.95\linewidth]{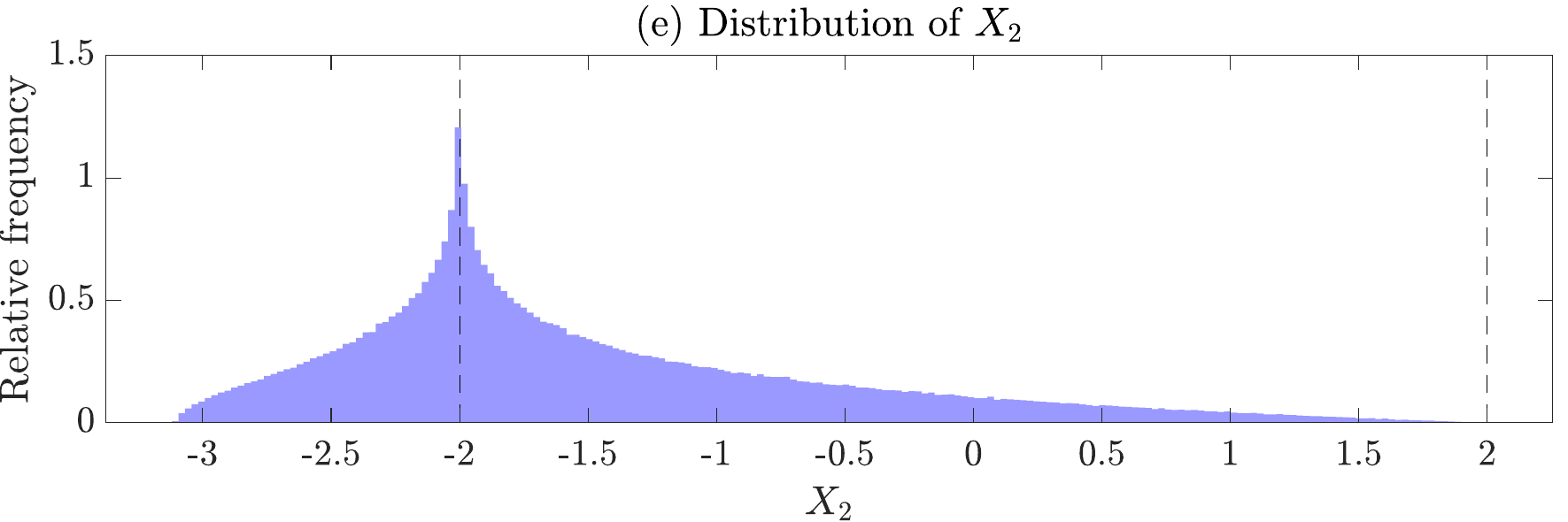}
	\end{subfigure}
	
	\vspace{0.4cm}
	
	\begin{subfigure}[b]{0.45\textwidth}
		\includegraphics[width=0.95\linewidth]{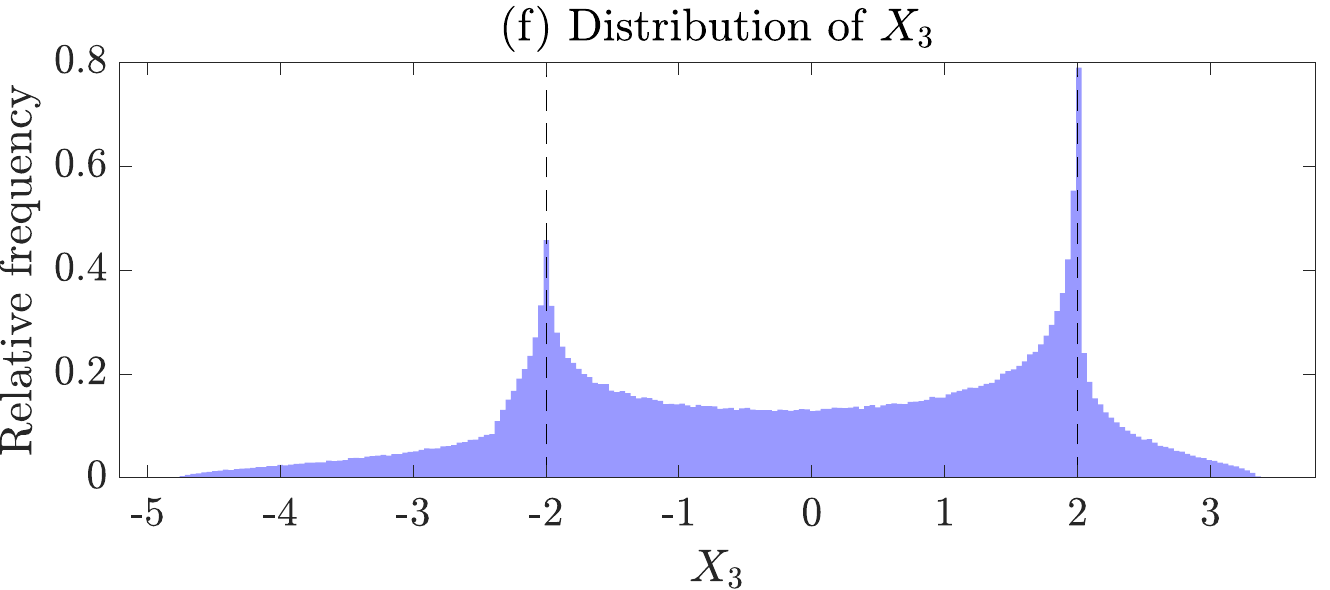}
	\end{subfigure}
	\begin{subfigure}[b]{0.45\textwidth}
		\includegraphics[width=0.95\linewidth]{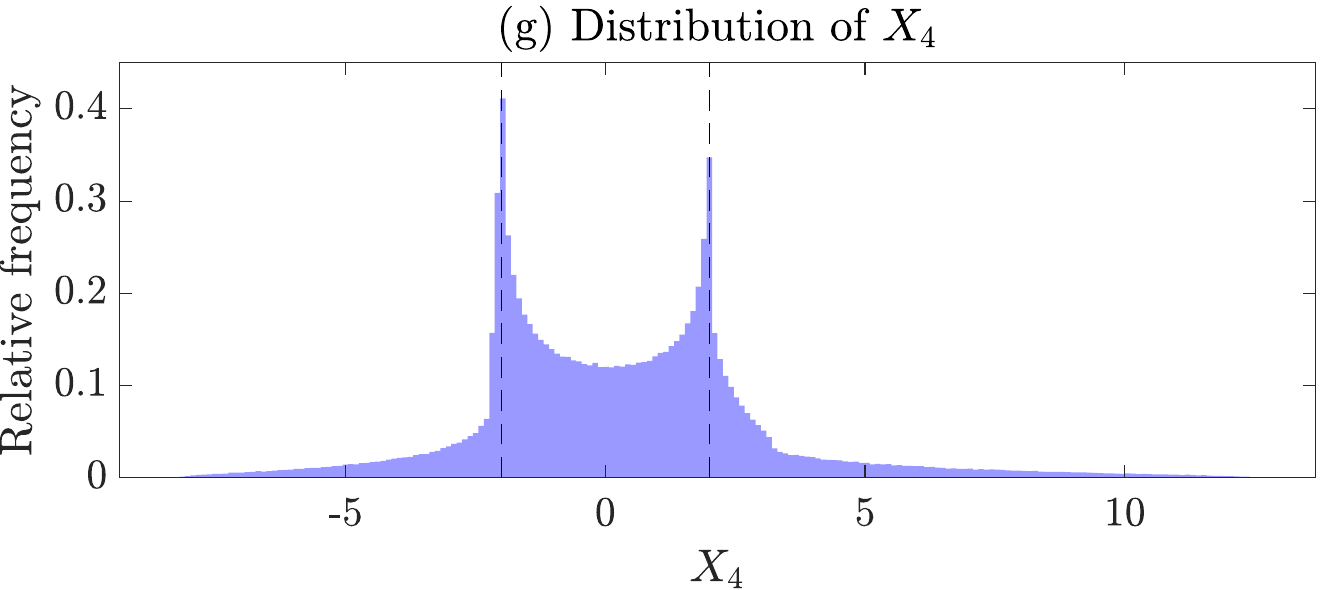}
	\end{subfigure}
	\caption{The distributions of the first few terms of the recursion relation (\ref{eq:recursion}) (and their key components). In (a), (b) and (c) the analytically computed distributions from (\ref{eq:coscosdist}), (\ref{eq:sinsindist_result}) and (\ref{eq:1221dist}), respectively, are shown in dashed lines. In (e), (f) and (g) the critical thresholds $\pm2$ are marked by vertical dashed lines. In each case, $10^5$ independent realisations of the initial conditions are used.} \label{fig:apphists}
\end{figure*}

To understand the second term of (\ref{eq:X3}), the first step is to compute the distribution of $\sin\Omega_0\sin\Omega_1$. If $X_i$ is uniformly distributed on $[-2,2]$, then $\Omega_i=\arccos(X_i/2)$ has the probability density function
\begin{equation} \label{eq:omegadist}
	f_{\Omega_i}(\omega) = \frac{1}{2} \sin\omega,
\end{equation} 
for $0\leq\omega\leq\pi$ and vanishes outside of this interval. Subsequently, the random variable $\sin(\Omega_i)$ has the probability density function
\begin{equation} \label{eq:sinomegadist}
	f_{\sin(\Omega_i)}(y) = \tan(\arcsin y) = \frac{y}{\sqrt{1-y^2}},
\end{equation}
for $0\leq y\leq1$, which vanishes outside of this interval. The product $\sin\Omega_0\sin\Omega_1$ can be handled by computing the cumulative distribution function
\begin{equation*} 
	P(\sin\Omega_0\sin\Omega_1\leq x) = \int_0^1 P(y \sin\Omega_0\leq x) \frac{y}{\sqrt{1-y^2}}\de y.
\end{equation*}
If $x/y<1$ then $P(\sin\Omega_0\leq x/y)=1-\sqrt{1-x^2/y^2}$. Conversely, if $x/y>1$ then $P(\sin\Omega_0\leq x/y)=1$. Hence, the integral above can be partitioned into the intervals $[0,x]$ and $[x,1]$ to see that 
\begin{equation} 
	P(\sin\Omega_0\sin\Omega_1\leq x) = 1-\int_x^1 \frac{\sqrt{y^2-x^2}}{\sqrt{1-y^2}} \de y.
\end{equation}
To compute the density function, this expression should be differentiated with respect to $x$, for which the Leibniz integral rule can be used to see that 
\begin{equation} \label{eq:sinsindist2}
	f_{\sin\Omega_0\sin\Omega_1}(x) = \int_x^1 \frac{x}{\sqrt{y^2-x^2}\sqrt{1-y^2}} \de y.
\end{equation}
This integral can be expressed in terms of the complete elliptic integral of the first kind, which is the function $K:(0,1)\to\mathbb{R}$ defined as \cite{abramowitz1948handbook}
\begin{equation}
	K(k)=\int_0^{\frac{\pi}{2}} \frac{1}{\sqrt{1-k^2\sin^2 t}} \de t.
\end{equation}
Making an appropriate substitution and change of variables in the integral (\ref{eq:sinsindist2}) gives that
\begin{equation} \label{eq:sinsindist_result}
	f_{\sin\Omega_0\sin\Omega_1}(x) = xK(1-x^2).
\end{equation}
This distribution is shown in Fig.~\ref{fig:apphists}(b), where the histogram shows $10^6$ independent realisations of $\sin\Omega_0\sin\Omega_1$ and the dotted line is the the density function (\ref{eq:sinsindist_result}).

The missing piece needed to understand the distribution of $X_2$ is the density function of $\Omega_0/\Omega_1 + \Omega_1/\Omega_0$. Given two independent random variables with density given by (\ref{eq:omegadist}), their ratio has distribution
\begin{equation} \label{eq:ratiodist}
	P\left(\frac{\Omega_0}{\Omega_1}\leq x\right) = \int_0^\pi P(\Omega_0\leq xy) \frac{1}{2}\sin y \de y,
\end{equation}
where $x>0$. If $0<\pi/x<y$ then $P(\Omega_0\leq xy)=1$. Conversely if $0<y<\pi/x$ then 
\begin{equation} \label{eq:distcalc}
	P(\Omega_0\leq xy) = \int_0^{xy} \frac{1}{2} \sin t \de t = \frac{1}{2}-\frac{1}{2}\cos xy.
\end{equation}
If $x<1$ then it must be the case that $y<\pi<\pi/x$ so it can be calculated directly that
\begin{align} \label{eq:ratioCDF1}
	P\left(\frac{\Omega_0}{\Omega_1}\leq x\right) &= \int_0^\pi \left( \frac{1}{2}-\frac{1}{2}\cos xy \right) \frac{1}{2}\sin y \de y  \nonumber \\&=\frac{1}{2}+\frac{1+\cos\pi x}{4(x^2-1)},
\end{align}
which can be differentiated to find that $\Omega_0/\Omega_1$ has probability density function given by
\begin{equation*}
	f_{\Omega_0/\Omega_1}(x)=-\frac{\pi(x^2-1)\sin\pi x+2x(1+\cos\pi x)}{4(x^2-1)^2},
\end{equation*}
for $0<x<1$. For $x>1$, the integral in (\ref{eq:ratiodist}) can be handled by partitioning it into the intervals $[0,\pi/x]$ and $[\pi/x,\pi]$:
\begin{eqnarray} \label{eq:ratioCDF2}
	P\left(\frac{\Omega_0}{\Omega_1}\leq x\right) &= \int_0^{\pi/x} \left( \frac{1}{2}-\frac{1}{2}\cos xy \right)\frac{1}{2}\sin y \de y \nonumber \\
	&+ \int_{\pi/x}^\pi \frac{1}{2}\sin y \de y \nonumber \\
	& = \frac{3+\cos\frac{\pi}{x}}{4} + \frac{1+\cos\frac{\pi}{x}}{4(x^2-1)},
\end{eqnarray}
which can be differentiated to obtain that
\begin{equation}
	f_{\Omega_0/\Omega_1}(x)=\frac{\pi(x^2-1)\sin\frac{\pi}{x}-2x(1+\cos\frac{\pi}{x})}{4(x^2-1)^2},
\end{equation}
for $x>1$. Subsequently, the distribution of $\Omega_0/\Omega_1 + \Omega_1/\Omega_0$ can be found by applying the transformation $x\mapsto x+1/x$. This has a global minimum for $x>0$ at $x=1$ so it holds that
\begin{equation}
	P\left(\frac{\Omega_0}{\Omega_1}+\frac{\Omega_1}{\Omega_0}\leq 2\right)=0.
\end{equation}
%
%%%%%%%%%%%%%%%%%%%%%%%%%%%
%
\begin{widetext}
Meanwhile, for any $x\geq2$, it holds that
\begin{align}
	P\left(\frac{\Omega_0}{\Omega_1}+\frac{\Omega_1}{\Omega_0}\leq x\right)
	&=P\left(\frac{x-\sqrt{x^2-4}}{2}\leq\frac{\Omega_0}{\Omega_1}\leq \frac{x+\sqrt{x^2-4}}{2}\right) \nonumber \\
	&=P\left(\frac{\Omega_0}{\Omega_1}\leq \frac{x+\sqrt{x^2-4}}{2}\right) - P\left(\frac{\Omega_0}{\Omega_1}\leq\frac{x-\sqrt{x^2-4}}{2}\right).
\end{align}
Using (\ref{eq:ratioCDF1}) and (\ref{eq:ratioCDF2}), this becomes
\begin{equation}
	\begin{split}
		P\left(\frac{\Omega_0}{\Omega_1}+\frac{\Omega_1}{\Omega_0}\leq x\right) &= \frac{1}{4} + \frac{1}{(x+\sqrt{x^2-4})^2-4}\left(1+ \cos\left( \frac{2\pi}{x+\sqrt{x^2-4}} \right)\right)
		\\ &\quad +\frac{1}{4}\cos\left( \frac{2\pi}{x+\sqrt{x^2-4}} \right) - \frac{1}{(x-\sqrt{x^2-4})^2-4}\left(1+ \cos\left( \frac{\pi}{2}(x-\sqrt{x^2-4}) \right)\right)
	\end{split}
\end{equation}
which can be differentiated to obtain the probability density function
\begin{equation} \label{eq:1221dist}
	\begin{split}
		f_{\Omega_0/\Omega_1+\Omega_1/\Omega_0}(x)&=\frac{1}{4(x^2-4)^\frac{3}{2}} \bigg[ -4-2\cos\left( \frac{\pi}{2}(x-\sqrt{x^2-4})\right) \\
		&\quad -2\cos\left( \frac{2\pi}{x+\sqrt{x^2-4}}\right) +\pi \sqrt{x^2-4}\sin \left( \frac{\pi}{2}(x-\sqrt{x^2-4})\right) \\
		&\qquad +\pi \sqrt{x^2-4}\sin \left(  \frac{2\pi}{x+\sqrt{x^2-4}} \right) 
		\bigg].
	\end{split}
\end{equation}
\end{widetext}
%
%%%%%%%%%%%%%%%%%%%%%%%%%%%
%
The distribution \eqref{eq:1221dist} is shown with the dashed line in Fig.~\ref{fig:apphists}(c), where it is compared to a histogram showing $10^6$ independent realisations of $\Omega_0/\Omega_1+\Omega_1/\Omega_0$. 

The distribution of $X_2$ could now be obtained by a similar approach, however it is not reasonable to expect that explicit analytic expressions will be possible. Nevertheless, the density functions calculated above are already sufficient to yield some intuition on the qualitative properties of the distribution of $X_2$, which is shown in Fig.~\ref{fig:apphists}(e).

First, consider the distribution of $(\Omega_0/\Omega_1+\Omega_1/\Omega_0)\sin\Omega_0\sin\Omega_1$. This variable must fall in the interval $[0,2]$ since it holds that $0\leq (x/y+y/x)\sin x\sin y\leq 2$ for $0\leq x,y\leq\pi$. Further, this function attains its maximal value of 2 when $x=y=\pi/2$. Both $\Omega_0$ and $\Omega_1$ are drawn from distributions with sinusoidal density functions (\ref{eq:omegadist}) which attain maxima at $\pi/2$. This explains why values tend to cluster around the upper end of the interval $[0,2]$, as shown in Fig.~\ref{fig:apphists}(d). The sharpness of this peak can be understood by examining the sharp peaks in the densities of $\sin\Omega_i$ (\ref{eq:sinomegadist}) and $\Omega_0/\Omega_1+\Omega_1/\Omega_0$ (\ref{eq:1221dist}): the values of $\sin\Omega_0$ and $\sin\Omega_1$ are strongly clustered around 1 and the values of $\Omega_0/\Omega_1+\Omega_1/\Omega_0$ are strongly clustered around 2, leading to a peak in the density of their product close to 2.

The distribution of $X_2$, shown in Fig.~\ref{fig:apphists}(e), is the difference between the distributions of $2\cos\Omega_0\cos\Omega_1$ in Fig.~\ref{fig:apphists}(a) and of $(\Omega_0/\Omega_1+\Omega_1/\Omega_0)\sin\Omega_0\sin\Omega_1$ in Fig.~\ref{fig:apphists}(d). In this case, there is strong clustering of values around -2. This can be understood from the fact that the density function of $2\cos\Omega_0\cos\Omega_1$ in (\ref{eq:coscosdist}) has a singularity at 0 and the distribution of $(\Omega_0/\Omega_1+\Omega_1/\Omega_0)\sin\Omega_0\sin\Omega_1$ in (\ref{eq:1221dist}) shows clustering around 2. This clustering of values of $X_2$ around -2 is the first hint of the van Hove singularities studied in this work emerging. Examining the distributions of $X_3$ and $X_4$ in Fig.~\ref{fig:apphists}(f) and Fig.~\ref{fig:apphists}(g), the singularities around $\pm2$ are increasingly clear. As shown in Fig.~\ref{fig:histograms}, as $n$ becomes large the distribution of $X_n$ confined to the interval $[-2,2]$ loses its asymmetry and the clustering of values around $\pm2$ converges to singularities in the density of states.

\vspace{1em}

\begin{acknowledgments}
This work was partly supported by the Engineering and Physical Sciences Research Council (EPSRC) through a Research Fellowship with grant number EP/X027422/1. The author would like to thank Ian Hooper, whose comments were the inspiration for this work, as well as Matteo Tanzi for helpful discussions. The codes used to obtain the numerical results presented here are available for download \cite{mycode}. 
\end{acknowledgments}

\bibliography{references}

\end{document}